\newtheorem{prop}{Proposition}
\newtheorem{theorem}{Theorem}
\newtheorem{lemma}{Lemma}
\begin{document}
%
% paper title
% Titles are generally capitalized except for words such as a, an, and, as,
% at, but, by, for, in, nor, of, on, or, the, to and up, which are usually
% not capitalized unless they are the first or last word of the title.
% Linebreaks \\ can be used within to get better formatting as desired.
% Do not put math or special symbols in the title.
% \title{Bare Demo of IEEEtran.cls\\ for IEEE Journals}
%
%
% author names and IEEE memberships
% note positions of commas and nonbreaking spaces ( ~ ) LaTeX will not break
% a structure at a ~ so this keeps an author's name from being broken across
% two lines.
% use \thanks{} to gain access to the first footnote area
% a separate \thanks must be used for each paragraph as LaTeX2e's \thanks
% was not built to handle multiple paragraphs
%

% \author{Michael~Shell,~\IEEEmembership{Member,~IEEE,}
%         John~Doe,~\IEEEmembership{Fellow,~OSA,}
%         and~Jane~Doe,~\IEEEmembership{Life~Fellow,~IEEE}% <-this % stops a space
% \thanks{M. Shell was with the Department
% of Electrical and Computer Engineering, Georgia Institute of Technology, Atlanta,
% GA, 30332 USA e-mail: (see http://www.michaelshell.org/contact.html).}% <-this % stops a space
% \thanks{J. Doe and J. Doe are with Anonymous University.}% <-this % stops a space
% \thanks{Manuscript received April 19, 2005; revised August 26, 2015.}}

\title{Generalized Design of Sampling Kernels for 2-D FRI Signals}
%
% Single address.
% ---------------
\author{Saurav Kumaraswami Shastri, Sunil Rudresh, and Chandra Sekhar Seelamantula,~\IEEEmembership{Senior Member,~IEEE} \thanks {The authors are with the Department of Electrical Engineering, Indian Institute of Science, Bangalore - 560012, India. Email:\{sauravs, sunilr, chandrasekhar\}@iisc.ac.in}}

\maketitle
% As a general rule, do not put math, special symbols or citations
% in the abstract or keywords.
\begin{abstract}
One of the interesting problems in the finite-rate-of-innovation signal sampling framework is the design of compactly supported sampling kernels. In this paper, we present a generic framework for designing sampling kernels in 2-D. We consider both {\it separable} and {\it nonseparable} kernels. The design is carried out in the frequency domain, where a set of alias cancellation conditions are imposed on the kernel's frequency response. The Paley-Wiener theorem for 2-D signals is invoked to arrive at admissible kernels with a compact support. As a specific case, we show that a certain separable extension of the 1-D design framework results in 2-D sum-of-modulated-spline (SMS) kernels. Similar to their 1-D counterparts, the 2-D SMS kernels have the attractive feature of reproducing a class of 2-D polynomial-modulated exponentials of a desired order. Also, the support of the kernels is independent of the order. The design framework is generic and also allows one to design nonseparable sampling kernels. To this end, we demonstrate the design of a nonseparable kernel and present simulation results.
\end{abstract}

% Note that keywords are not normally used for peerreview papers.
\begin{IEEEkeywords}
Finite-rate-of-innovation (FRI) signals, 2-D FRI signals, sub-Nyquist sampling, separable and nonseparable FRI sampling kernels, Paley-Wiener theorem for 2-D functions, sum-of-modulated splines
\end{IEEEkeywords}

% For peer review papers, you can put extra information on the cover
% page as needed:
% \ifCLASSOPTIONpeerreview
% \begin{center} \bfseries EDICS Category: 3-BBND \end{center}
% \fi
%
% For peerreview papers, this IEEEtran command inserts a page break and
% creates the second title. It will be ignored for other modes.
\IEEEpeerreviewmaketitle

\section{Introduction}
% The very first letter is a 2 line initial drop letter followed
% by the rest of the first word in caps.
% 
% form to use if the first word consists of a single letter:
% \IEEEPARstart{A}{demo} file is ....
% 
% form to use if you need the single drop letter followed by
% normal text (unknown if ever used by the IEEE):
% \IEEEPARstart{A}{}demo file is ....
% 
% Some journals put the first two words in caps:
% \IEEEPARstart{T}{his demo} file is ....
% 
% Here we have the typical use of a "T" for an initial drop letter
% and "HIS" in caps to complete the first word.
\IEEEPARstart{I}{n} their seminal work, Vetterli et al. \cite{vetterli} developed a sampling framework for a certain class of nonbandlimited signals that have a finite rate of innovation (FRI) or finite degrees of freedom per unit time/space. The sampling process consists of passing the signal through a suitable kernel, followed by sampling the resulting signal at specific instants. Under certain conditions, the samples thus obtained are sufficient to completely characterize the signal. The degrees of freedom or parameters of the signal are estimated using a suitable reconstruction technique that is coupled to the sampling process. One of the major aspects of the FRI sampling framework is the design of suitable sampling kernels that are realizable and applicable to a larger class of FRI signals. 

Consider the 2-D signal 
\begin{equation}
\label{eq:2-Dfri_time}
    f(x,y) = \sum_{\ell = 1}^{L}{\gamma_\ell}\,h(x - x_\ell, y - y_\ell),
\end{equation}
which is a sum of scaled and shifted versions of a known function $h(x,y)$, and $\{\gamma_\ell, x_\ell, y_\ell\}_{\ell=1}^{L}$ are the parameters that specify the signal $f(x,y)$. The signal in \eqref{eq:2-Dfri_time} is an FRI signal with $3L$ degrees of freedom. This signal model is frequently encountered in several { \it imaging applications} such as localization microscopy \cite{betzig2006imaging,folling2008fluorescence}, astronomical imaging \cite{ImageAstronomy_Molina, PantinEtAl07}, deflectometry \cite{sudhakar2013compressive}, etc. In this paper, we address the problem of designing suitable sampling kernels for signals of the form \eqref{eq:2-Dfri_time}.  Specifically, we develop a generalized kernel design methodology.  Before proceeding with the design framework, we briefly recall the various 1-D and 2-D sampling kernels proposed in the literature so far.

\subsection{Sampling Kernels}
In the 1-D case, Vetterli et al. \cite{vetterli} proposed infinitely supported sinc and Gaussian kernels, whereas Dragotti et al. \cite{fri_strang} designed a class of compactly supported kernels that reproduce polynomials or exponentials. Tur et al. \cite{eldar_sos} developed alias cancellation conditions and proposed the sum-of-sincs kernel in the frequency domain. Seelamantula and Unser \cite{css}, and Olkkonen and Olkkonen \cite{olkkonen} employed practiaclly realizable kernels derived from resistor-capacitor circuits to sample and reconstruct a stream of Diracs. Recently, Mulleti and Seelamantula \cite{satish_paley_wiener} developed a generalized method for designing 1-D sampling kernels based on the Paley-Wiener theorem, and as a specific construction, they focussed on the class of sum-of-modulated-spline (SMS) kernels. 

Following the 1-D sampling framework \cite{vetterli}, Maravi\'c and Vetterli \cite{maravic_2D} developed a sampling and reconstruction framework for 2-D FRI signals with 2-D sinc and Gaussian functions as sampling kernels. Shukla and Dragotti \cite{shukla2007sampling} considered multidimensional FRI signals and proposed polynomial reproducing kernels for sampling convex shapes and polygons. As an application to step-edge detection in images, Baboulaz et al. \cite{baboulaz} developed a local reconstruction scheme with the B-spline kernel. Improving upon \cite{baboulaz}, Hirabayashi et al. \cite{hirabayashi2010spline} designed exponential reproducing kernels and showed that they perform better than polynomial reproducing kernels. Chen et al. \cite{chen20122D} generalized the B-spline sampling kernels for step-edge detection to polygon signal reconstruction using practical sampling kernels. Pan et al. \cite{FRI_Curves_Pan} employed the 2-D sinc sampling kernel and developed a reconstruction scheme for a certain class of parameterizable 2-D curves. Depending upon the number of parameters, the signal model efficiently represents a wider class of curves that are more complex than polygons. Recently, De and Seelamantula \cite{anindita_2D_fri} developed the separable extension of the 1-D non-repeating sum-of-sincs (NR-SoS) kernel \cite{mulleti2014ultrasound} to arrive at 2-D NR-SoS and 2-D SMS kernels.  
  
\subsection{This Paper}
The main contribution of this paper is a generalized framework for designing compactly supported 2-D sampling kernels. The framework allows for the design of separable or nonseparable kernels. To the best of our knowledge, this is the first methodology for the generic design of nonseparable kernels. Starting from a set of alias cancellation conditions that have to be satisfied by a sampling kernel in the frequency domain, admissible sampling kernels with compact support are developed. The characterization of a kernel starting from alias cancellation conditions to enforcing compact support is based on the Paley-Wiener theorem for functions in $\mathbb{R}^d$, $d \geq 1$. We leverage the 1-D kernel design framework proposed in \cite{satish_paley_wiener} and extend it to 2-D. To begin with, we show that a certain separable extension of the 1-D case results in 2-D SMS kernels \cite{anindita_2D_fri}, which have the attractive feature of reproducing separable 2-D polynomial-modulated exponentials. Further, to demonstrate the generalizability of the proposed design framework, we develop a nonseparable sampling kernel. We present simulation results demonstrating exact recovery of a 2-D Dirac stream using one such nonseparable sampling kernel. 

\section{The 2-D FRI Sampling and Reconstruction Problems}
A schematic of the kernel-based sampling framework is shown in Fig.~\ref{fig:block}, where the input signal $f$ is passed through a suitable kernel $g$. The resulting signal $\psi(x,y)=(f*g)(x,y)$ is sampled with the sampling intervals $T_{sx}$ and $T_{sy}$ along $x$ and $y$ axes, respectively to get the measurements $\{\psi(n_1T_{sx},n_2T_{sy})\}$, $\{(n_1, n_2)\}\in \mathbb{Z}^2$.

Let $F(s_x,s_y)=\int_{\mathbb{R}^2}f(x,y) e^{-(xs_x+ys_y)} \mathrm{d}x \mathrm{d}y$ be the Laplace transform of $f$, where $s_x=\sigma_x+\mathrm{j}\Omega_x$ and $s_y=\sigma_y+\mathrm{j}\Omega_y$. The 2-D continuous-time Fourier transform (CTFT) of the signal $f$  in~\eqref{eq:2-Dfri_time} is 
\begin{equation}
    \label{eq:2-Dfri_freq}
    F(\mathrm{j}\Omega_x,\mathrm{j}\Omega_y) = H(\mathrm{j}\Omega_x,\mathrm{j}\Omega_y)\,\sum_{\ell = 1}^{L}\gamma_\ell e^{-\mathrm{j}(\Omega_x x_\ell + \Omega_y y_\ell)},
\end{equation}
where $H(\mathrm{j}\Omega_x,\mathrm{j}\Omega_y)$ is the 2-D CTFT of $h(x,y)$. Let  $\mathcal{S}$ denote the set of frequency locations in the $\Omega_x-\Omega_y$ plane defined as $\mathcal{S}=\{(k_1\Omega_{0x}, k_2\Omega_{0y})\}_{k_1\in\mathcal{K}_1, k_2\in\mathcal{K}_2}$ for some non-zero $\Omega_{0x}$ and $\Omega_{0y}$, where $\mathcal{K}_1$ and $\mathcal{K}_2$ are sets of contiguous integers chosen suitably based on the model order $L$ and the noise statistics. Throughout the paper, unless specified otherwise, $k_1 \in \mathcal{K}_1$ and $k_2 \in \mathcal{K}_2$. Now, the measurements of $P(\mathrm{j}\Omega_x,\mathrm{j}\Omega_y) = \frac{F(\mathrm{j}\Omega_x, \mathrm{j}\Omega_y)}{H(\mathrm{j}\Omega_x, \mathrm{j}\Omega_y)}$ evaluated on $\mathcal{S}$ are given by 
\begin{align}
 P(\mathrm{j}k_1\Omega_{0x},\mathrm{j}k_2\Omega_{0y}) =  \sum_{\ell = 1}^{L}\gamma_\ell e^{-\mathrm{j}(k_1\Omega_{0x} x_\ell + k_2\Omega_{0y} y_\ell)} 
 \label{eq:FRI_kwo}.
\end{align}
To avoid singularities, the set $\mathcal{S}$ is chosen such that $H(\mathrm{j}\Omega_x, \mathrm{j}\Omega_y) \neq 0$ on $\mathcal{S}$. The right-hand side of~\eqref{eq:FRI_kwo} is in the form of sum-of-weighted-complex exponentials (SWCEs). The estimation of the parameters $\{x_\ell, y_\ell\}_{\ell = 1}^{L}$ from the measurements of the SWCE form in~\eqref{eq:FRI_kwo} is performed by employing high-resolution spectral estimation (HRSE) techniques~\cite{moses} such as annihilating filter~\cite{prony} applied suitably for 2-D case or 2-D subspace methods~\cite{2DESPRIT_rouquette2001,2DHarmonicRetrieval_vanpoucke1994,2DparameterPaulraj1998,2DESPRIT_haardt1995}. For more details about the application of these techniques, conditions on the minimum number of measurements required for exact recovery in the absence of noise, etc., the reader is referred to~\cite{maravic_2D, maravic2005sampling}.  
\begin{figure}[t]
\centering
{\includegraphics[width=4.0in]{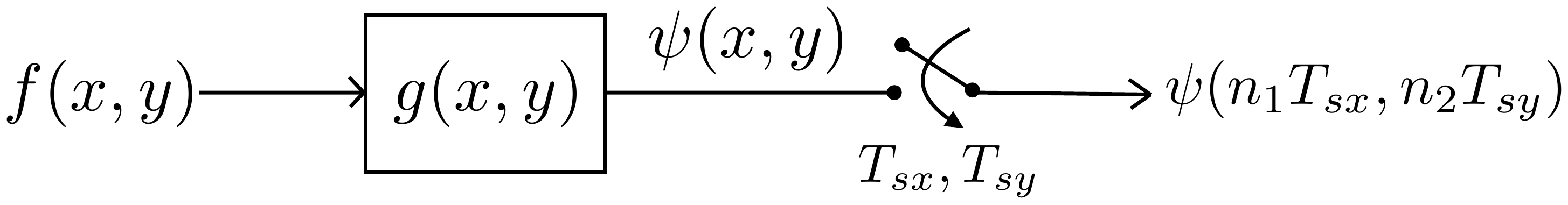}}
\caption{Schematic of a 2-D kernel-based sampling scheme.}
\label{fig:block}
\end{figure}

As $h$ is assumed to be known a priori, given the non-aliased samples of $F$ on $\mathcal{S}$, one could exactly recover the parameters $\{ x_\ell, y_\ell\}_{\ell = 1}^{L}$. The sampling kernel $g$ has to be designed in such a way that the non-aliased samples of $F$ on $\mathcal{S}$ can be obtained by the spatial-domain measurements $\{\psi(n_1T_{sx},n_2T_{sy})\}$.  Recently, we \cite{anindita_2D_fri} derived the conditions on the frequency response of the sampling kernel $g$, which are necessary to obtain non-alised samples of $F$ on $\mathcal{S}$:
\begin{align}
\centering
    &G(\mathrm{j}k_1\Omega_{0x},\mathrm{j}k_2\Omega_{0y}) \neq 0, \, \forall\, k_1\in\mathcal{K}_1, k_2\in\mathcal{K}_2,\,\text{and} \label{eq:alias_cancle_1} \\
    &G(\mathrm{j}k_1\Omega_{0x} + \mathrm{j}m_1\Omega_{sx}\, ,\, \mathrm{j}k_2\Omega_{0y} + \mathrm{j}m_2\Omega_{sy}) = 0,\,\quad  \forall\,k_1\in\mathcal{K}_1, k_2\in\mathcal{K}_2, \, \text{and}\, \forall\, m_1, m_2 \in\mathbb{Z}\backslash\{0\}, \label{eq:alias_cancle_2} \\
    & \text{with} \,\, \Omega_{sx}\geq |\mathcal{K}_1|\Omega_{0x}\,\, \text{and}\,\, \Omega_{sy}\geq |\mathcal{K}_2|\Omega_{0y}.
    \label{eq:alias_cancle_3} 
\end{align}
The spatial-domain sampling intervals are given by $T_{sx}=\frac{2\pi}{\Omega_{sx}}$ and $T_{sy}=\frac{2\pi}{\Omega_{sy}}$. Also, it was shown that if the kernel $g$ satisfies  (\ref{eq:alias_cancle_1})-(\ref{eq:alias_cancle_3}), then the 2-D discrete-time Fourier transform of $\{\psi(n_1T_{sx},n_2T_{sy})\}$ evaluated on $\mathcal{S}$ gives the non-aliased samples of $F$ on $\mathcal{S}$.

\section{Generalized Design of 2-D Sampling Kernels}
Recently, a generalized method for designing compactly supported sampling kernels for 1-D FRI signals was developed by Mulleti and Seelamantula \cite{satish_paley_wiener}. We extend that framework to the design of 2-D sampling kernels (both separable and nonseparable varieties). The framework consists of two steps: first, to design kernels that satisfy the alias-cancellation conditions (\ref{eq:alias_cancle_1})-(\ref{eq:alias_cancle_2}), and second, enforcing the kernels to be compactly supported. 

Consider a sampling kernel $g(x,y)$, that has a rational 2-D Laplace transform of the form
\begin{equation}
    \displaystyle G(s_x,s_y)= \frac{V(s_x,s_y)\,Z(s_x,s_y)}{U(s_x,s_y)},
    \label{eq:gen_samp_kernel}
\end{equation}
where $Z(s_x,s_y)$ is a function that has zeros at $\{n_1\Omega_{0x},n_2\Omega_{0y}\}$, $\forall\,\{(n_1,n_2)\}\, \in\, \mathbb{Z}^2$, $U(s_x,s_y)$ is a function that introduces poles in $G(s_x, s_y)$ such that $G(s_x,s_y)$ satisfies the alias-cancellation conditions (\ref{eq:alias_cancle_1})-(\ref{eq:alias_cancle_2}), and $V(s_x,s_y)$ is a polynomial function that does not have zeros on $\mathcal{S}$. Next, we choose appropriate $Z(s_x,s_y)$ such that the designed kernels have a compact support. To this end, we invoke the 2-D Paley-Wiener theorem \cite{paley_wiener}, which gives the relation between the growth of entire functions of the exponential type (EFET) in the $s$-domain and the support of their time-domain counterparts. Although in higher dimensions, there are several versions of the theorem, we specifically recall the one given by Gel'fand and Shilov \cite[Chapter 4]{gelfand_generalized_functions}, that fits perfectly in the 2-D design framework.

\begin {theorem} 
\label{theo:paley_multi}
A function $g(x,y) \in L^2(\mathbb{R}^2)$ is compactly supported over the domain $\mathcal{T}=\{|x| \leq \tau_x, |y| \leq \tau_y\}$ if and only if its Laplace transform $G(s_x, s_y)$ is an EFET, that is, there exist a constant $C$ such that $|G(s_x,s_y)| < Ce^{\tau_x |s_x|+\tau_y |s_y|}$, and $G(\mathrm{j}\Omega_x, \mathrm{j}\Omega_y) \in L^2(\mathbb{R}^2)$.
\end{theorem}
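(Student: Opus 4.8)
The plan is to establish both implications, treating the necessity (compact support $\Rightarrow$ EFET) as the routine direction and reserving the bulk of the effort for the sufficiency (EFET together with $L^2$ on the imaginary axes $\Rightarrow$ compact support), which I would reduce to the classical one-dimensional Paley--Wiener theorem by slicing one variable at a time.

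For the necessity, suppose $g$ is supported on $\mathcal{T}$. Since $g \in L^2(\mathcal{T})$ and $\mathcal{T}$ has finite measure, Cauchy--Schwarz gives $g \in L^1(\mathcal{T})$, so
\[
G(s_x, s_y) = \int_{-\tau_x}^{\tau_x}\int_{-\tau_y}^{\tau_y} g(x,y)\, e^{-(x s_x + y s_y)}\, \mathrm{d}y\,\mathrm{d}x
\]
converges absolutely for every $(s_x, s_y) \in \mathbb{C}^2$. The integrand is jointly entire in $(s_x,s_y)$, and dominated convergence (or Morera's theorem applied in each variable, followed by Hartogs' theorem) lets me differentiate under the integral, so $G$ is entire. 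For the growth bound I would use $|e^{-x s_x}| = e^{-x\sigma_x} \leq e^{|x|\,|\sigma_x|} \leq e^{\tau_x |s_x|}$ on $|x| \leq \tau_x$ (since $|\sigma_x| \leq |s_x|$), and likewise in $y$; pulling these factors out of the integral yields $|G(s_x, s_y)| \leq C\, e^{\tau_x |s_x| + \tau_y |s_y|}$ with $C = \|g\|_{L^1(\mathcal{T})}$, establishing the EFET bound. Finally, $G(\mathrm{j}\Omega_x, \mathrm{j}\Omega_y)$ is exactly the 2-D Fourier transform of $g$, so Plancherel's theorem gives $G(\mathrm{j}\Omega_x, \mathrm{j}\Omega_y) \in L^2(\mathbb{R}^2)$.

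For the sufficiency, I would exploit the product structure of the bound. Since $G(\mathrm{j}\Omega_x, \mathrm{j}\Omega_y) \in L^2(\mathbb{R}^2)$, Fubini's theorem ensures that for almost every $\Omega_y$ the slice $\Omega_x \mapsto G(\mathrm{j}\Omega_x, \mathrm{j}\Omega_y)$ lies in $L^2(\mathbb{R})$. Fixing such an $\Omega_y$ and setting $s_y = \mathrm{j}\Omega_y$, the function $s_x \mapsto G(s_x, \mathrm{j}\Omega_y)$ is entire (a restriction of the jointly entire $G$) and, from the hypothesis, obeys $|G(s_x, \mathrm{j}\Omega_y)| \leq C\, e^{\tau_y |\Omega_y|}\, e^{\tau_x |s_x|}$, i.e. it is a one-dimensional EFET of type $\tau_x$ that is square-integrable on the imaginary axis. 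The 1-D Paley--Wiener theorem then forces its inverse Fourier transform in $\Omega_x$, call it $\tilde{g}(x,\Omega_y)$, to vanish for $|x| > \tau_x$. As this holds for a.e. $\Omega_y$, and $g(x,y)$ is recovered from $\tilde{g}(x,\Omega_y)$ by a further inverse Fourier transform in $\Omega_y$ (acting only on the $\Omega_y$ variable), the support constraint in $x$ is inherited: $g(x,y) = 0$ whenever $|x| > \tau_x$. Interchanging the roles of $x$ and $y$ — slicing in $\Omega_x$ and applying 1-D Paley--Wiener in $s_y$ — gives $g(x,y) = 0$ whenever $|y| > \tau_y$, and intersecting the two conditions confines the support to $\mathcal{T}$.

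I expect the sufficiency direction to be the main obstacle, and within it the delicate points are measure-theoretic rather than analytic: justifying that the set of $\Omega_y$ for which the slice fails to be $L^2$ is null, ensuring via Fubini that the iterated inverse transforms reconstruct $g$ correctly, and arguing that a support constraint holding for a.e. slice propagates to $g$ after the second transform. The appeal to Hartogs' theorem for joint holomorphy in the necessity direction is a minor technical point that could instead be handled by a direct two-variable Morera argument.
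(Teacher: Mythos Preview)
The paper does not supply a proof of this theorem at all: it is explicitly \emph{recalled} from Gel'fand and Shilov (see the sentence preceding the theorem and the citation \texttt{gelfand\_generalized\_functions}), and the appendices only prove Propositions~\ref{prop:separable} and~\ref{prop:non_separable} and Lemma~\ref{lemma:inverse_fourier_nonseparable_sinc}. So there is no in-paper argument to compare your proposal against.

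That said, your outline is a correct and standard route to the result. The necessity direction is exactly as you describe. For sufficiency, reducing to the one-variable Paley--Wiener theorem by freezing $s_y = \mathrm{j}\Omega_y$ on an almost-every set, invoking the 1-D theorem in $s_x$, and then propagating the support constraint through the remaining partial inverse transform is precisely the slicing argument one finds in the literature; the symmetric step in $s_y$ then closes the box. You have also correctly flagged the only real technicalities: Fubini to guarantee $L^2$ slices almost everywhere, the fact that the partial inverse Fourier transform is an $L^2$ isometry so that $\tilde{g}\in L^2(\mathbb{R}^2)$ and the second transform makes sense slice-wise, and the null-set bookkeeping needed to pass from ``vanishes for a.e.\ $\Omega_y$ when $|x|>\tau_x$'' to ``$g(x,y)=0$ for a.e.\ $(x,y)$ with $|x|>\tau_x$.'' None of these is a genuine obstacle.
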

In the case of 1-D, the design of a specific class of 1-D SMS kernels for a particular choice of $Z(s_x)$, $V(s_x)$, and $U(s_x)$ was demonstrated in \cite{satish_paley_wiener}. In the 2-D case, we consider two particular choices of $Z(s_x,s_y)$, one each for separable and nonseparable cases, that result in compactly supported kernels. 

\subsection{Separable Kernels}
The design of separable 2-D sampling kernels is a straightforward extension of the 1-D result and is summarized in the following proposition.    
\begin{prop}
\label{prop:separable}
Let $Z(s_x,s_y) = \sinh^{r_1}\left(\frac{\pi s_x}{\Omega_{0x}} \right) \sinh^{r_2}\left(\frac{\pi s_y}{\Omega_{0y}}\right)$ and $U(s_x,s_y) = \hspace{-1.0mm} \prod \limits_{\substack{k_1\in\mathcal{K}_1 \\ k_2\in\mathcal{K}_2}} \hspace{-1.0mm} (s_x-\mathrm{j}k_1\Omega_{0x})^{r_1}(s_y-\mathrm{j}k_2\Omega_{0y})^{r_2}$. Then $\displaystyle G(s_x,s_y)$ satisfies the alias-cancellation conditions \eqref{eq:alias_cancle_1} and \eqref{eq:alias_cancle_2}, and $g(x,y)$ is compactly supported to $\Big[-\frac{r_1\,T_{0x}}{2}, \frac{r_1\,T_{0x}}{2}\Big] \times \left[-\frac{r_2\,T_{0y}}{2}, \frac{r_2\,T_{0y}}{2}\right]$, where $T_{0x}=\frac{2\pi}{\Omega_{0x}}$ and $T_{0y}=\frac{2\pi}{\Omega_{0y}}$. 
\end{prop}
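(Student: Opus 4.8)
The plan is to reduce everything to the one-dimensional construction of \cite{satish_paley_wiener} by exploiting separability. With the stated $Z$ and $U$, both numerator and denominator factor into a function of $s_x$ times a function of $s_y$; choosing the free polynomial $V$ to be separable and nonvanishing on $\mathcal{S}$, the Laplace transform factors as $G(s_x,s_y)=G_x(s_x)\,G_y(s_y)$, where each of $G_x,G_y$ is precisely a 1-D SMS kernel of the form treated in \cite{satish_paley_wiener}. Both assertions then follow coordinatewise, the spatial kernel being the tensor product $g(x,y)=g_x(x)\,g_y(y)$.

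First I would verify the alias-cancellation conditions. The factor $\sinh^{r_1}(\pi s_x/\Omega_{0x})$ has zeros of order $r_1$ at every $s_x=\mathrm{j}n_1\Omega_{0x}$, $n_1\in\mathbb{Z}$, and similarly $\sinh^{r_2}(\pi s_y/\Omega_{0y})$ in $s_y$. On $\mathcal{S}$, i.e.\ at $(\mathrm{j}k_1\Omega_{0x},\mathrm{j}k_2\Omega_{0y})$ with $k_1\in\mathcal{K}_1,k_2\in\mathcal{K}_2$, these zeros match the order-$r_1$ and order-$r_2$ poles contributed by $U$, so $Z/U$ tends to a finite nonzero limit; since $V$ does not vanish on $\mathcal{S}$, this gives $G\neq 0$ and establishes \eqref{eq:alias_cancle_1}. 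Taking $\Omega_{sx}$ and $\Omega_{sy}$ to be integer multiples of $\Omega_{0x}$ and $\Omega_{0y}$ (as in the SMS construction, with $\Omega_{sx}=|\mathcal{K}_1|\Omega_{0x}$ and $\Omega_{sy}=|\mathcal{K}_2|\Omega_{0y}$ satisfying \eqref{eq:alias_cancle_3}), an aliased frequency $\mathrm{j}k_1\Omega_{0x}+\mathrm{j}m_1\Omega_{sx}$ with $m_1\neq 0$ lands on $\mathrm{j}(k_1+m_1|\mathcal{K}_1|)\Omega_{0x}$, an integer multiple of $\Omega_{0x}$ lying outside $\mathcal{K}_1$; hence $Z$ still vanishes there while $U$ contributes no pole, forcing $G=0$ and establishing \eqref{eq:alias_cancle_2}. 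Separability makes $G$ vanish whenever either index shift is nonzero, so no alias survives.

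For the compact-support claim I would invoke the 2-D Paley-Wiener theorem (Theorem~\ref{theo:paley_multi}). The pole cancellation above makes $G$ entire; moreover $U$ and $V$ are polynomials, hence of exponential type zero, and with $V$ of degree not exceeding that of $U$ the quotient $V/U$ is bounded at infinity, so the growth of $G$ is inherited entirely from $Z$. Since $|\sinh(z)|\leq e^{|z|}$, the factor $\sinh^{r_1}(\pi s_x/\Omega_{0x})$ is of exponential type $r_1\pi/\Omega_{0x}=r_1T_{0x}/2$ in $s_x$ and $\sinh^{r_2}(\pi s_y/\Omega_{0y})$ of type $r_2T_{0y}/2$ in $s_y$, yielding a bound $|G(s_x,s_y)|<C\,e^{\tau_x|s_x|+\tau_y|s_y|}$ with $\tau_x=r_1T_{0x}/2$ and $\tau_y=r_2T_{0y}/2$. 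On the imaginary axes, $\sinh(\mathrm{j}\theta)=\mathrm{j}\sin\theta$ is bounded, so $G(\mathrm{j}\Omega_x,\mathrm{j}\Omega_y)$ decays like the reciprocal of the denominator polynomial and lies in $L^2(\mathbb{R}^2)$. Theorem~\ref{theo:paley_multi} then yields exactly the claimed support $\left[-\frac{r_1T_{0x}}{2},\frac{r_1T_{0x}}{2}\right]\times\left[-\frac{r_2T_{0y}}{2},\frac{r_2T_{0y}}{2}\right]$.

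The step I expect to be the main obstacle is the bookkeeping that certifies $G$ is entire, namely that the order of the $\sinh$-zeros on $\mathcal{S}$ precisely matches the pole order supplied by $U$ while $V$ neither cancels those zeros nor introduces new poles on $\mathcal{S}$; this is what simultaneously guarantees \eqref{eq:alias_cancle_1} and the applicability of Theorem~\ref{theo:paley_multi}. Everything else is a direct coordinatewise transcription of the 1-D arguments of \cite{satish_paley_wiener}, so the genuinely new content is only the tensor-product bound on the exponential type and the accompanying $L^2$ verification in two variables.
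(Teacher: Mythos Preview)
Your proposal is correct and follows essentially the same route as the paper: invoke the 2-D Paley--Wiener theorem by bounding $|\sinh(z)|\le e^{|z|}$ to obtain the exponential type (and hence the stated support), and then verify $G(\mathrm{j}\Omega_x,\mathrm{j}\Omega_y)\in L^2(\mathbb{R}^2)$ by reducing to the 1-D integrals of \cite{satish_paley_wiener}. The only difference in execution is that the paper carries out the $L^2$ step via an explicit partial-fraction decomposition and Cauchy--Schwarz rather than your heuristic ``decays like the reciprocal of the denominator,'' and the paper's appendix omits the alias-cancellation check that you (correctly) include.
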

Using tools such as the partial fraction decomposition and the binomial theorem, it can be shown that the impulse response of the kernel proposed in Proposition \ref{prop:separable} is a sum of modulated $r^{\text{th}}$-order separable polynomial B-splines (denoted by $\beta^r(\cdot)$). The proof of the proposition is given in Appendix \ref{appendix:prop_separable}.

For the choice of 
\begin{align}
    V(\mathrm{j}\Omega_{0x},\mathrm{j}\Omega_{0y})= \sum \limits_{\substack{p_1\in\mathcal{K}_1 \\ p_2\in\mathcal{K}_2}}\hspace{-1.0mm}d_{p_1,p_2}\tilde{V}(\mathrm{j}\Omega_{0x},\mathrm{j}\Omega_{0y}),
    \label{eq:separable_V}
\end{align}
where $\tilde{V}(\mathrm{j}\Omega_{0x},\mathrm{j}\Omega_{0y})\hspace{-0.25mm}=\hspace{-1.5mm}\prod \limits_{\substack{k_1\in\mathcal{K}_1\backslash\{p_1\} \\ k_2\in\mathcal{K}_2\backslash\{p_2\}}} \hspace{-2.0mm} (\mathrm{j}\Omega_x-\mathrm{j}k_1\Omega_{0x})^{r_1} (\mathrm{j}\Omega_y-\mathrm{j}k_2\Omega_{0y})^{r_2}$ and with appropriate constants $d_{p_1,p_2}$, we arrive at the special class of compactly supported 2-D SMS kernels proposed in  \cite{anindita_2D_fri}, whose frequency and impulse responses are
\begin{align}
G_{S}(\mathrm{j}\Omega_x, \mathrm{j}\Omega_y) =\hspace{-0.9mm}\sum \limits_{\substack{k_1\in\mathcal{K}_1 \\ k_2\in\mathcal{K}_2}} \hspace{-1.0mm}\text{sinc}^{r_1}\hspace{-1.0mm}\left(\hspace{-0.7mm}\frac{\Omega_x \hspace{-0.5mm}- \hspace{-0.5mm}k_1\,\Omega_{0x}}{\Omega_{0x}}\hspace{-0.7mm}\right)\hspace{-0.7mm}\text{sinc}^{r_2}\hspace{-1.0mm}\left(\hspace{-0.7mm}\frac{\Omega_y \hspace{-0.5mm}-\hspace{-0.5mm} k_2\,\Omega_{0y}}{\Omega_{0y}}\hspace{-0.7mm}\right), \quad \text{and}
 \label{eq:2-D_SMS_freq}
 \end{align}
\begin{align}
g_S(x,y)= \beta^{(r_1-1)}\hspace{-0.7mm}\left(\hspace{-0.7mm}\frac{x}{T_{0x}}\hspace{-0.7mm}\right) \hspace{-0.3mm} \beta^{(r_2-1)}\hspace{-0.7mm}\left(\hspace{-0.7mm}\frac{y}{T_{0y}}\hspace{-0.7mm}\right)
 \hspace{-1.5mm} \sum \limits_{\substack{k_1\in\mathcal{K}_1 \\ k_2\in\mathcal{K}_2}}\hspace{-1.5mm}e^{\mathrm{j}(k_1\Omega_{0x} x+ k_2\Omega_{0y} y)},
 \label{eq:2-D_SMS}
 \end{align}
respectively (Appendix \ref{appendix:kernel_separable_expression}). One could chose different values for $r_1$ and $r_2$ in $g_S(x,y)$ that would result in kernels with different shapes and spatial-domain supports. Figure~\ref{fig:time_dom_separable} shows impulse responses of two such 2-D SMS kernels.\\
\subsubsection{Polynomial and Exponential Reproducing Kernels}
Since the kernel $g_S(x,y)$ in (\ref{eq:2-D_SMS}) is separable, and it was shown in \cite{satish_paley_wiener} that the 1-D SMS kernels satisfy the generalized Strang-Fix conditions \cite{uriguen2013fri}, it is readily seen that the kernel $g_S(x,y)$ could be used to generate a particular class of polynomials/exponentials. Specifically, for separable 2-D SMS kernels, there exist constants $\{c_{k_1,k_2,n_1,n_2}^{i,j}\}$ such that 
$\displaystyle\sum_{n_1,n_2\in \mathbb{Z}}c_{k_1,k_2,n_1,n_2}^{i,j}\,g_S \left( \frac{x-n_1T_{sx}}{T_{sx}}, \frac{y-n_2T_{sy}}{T_{sy}} \right) =$ 
$~\left(\displaystyle\frac{x^i y^j}{T_{sx}^i T_{sy}^j}\right)\, e^{\mathrm{j}k_1\Omega_{0x} x+\mathrm{j}k_2\Omega_{0y} y}$,
for $i \in \llbracket 0, r_1 \rrbracket$ and $j \in \llbracket 0, r_2 \rrbracket$, where $\llbracket a, b \rrbracket$ denotes the set of contiguous integers from $a$ to $b$, both included.

The support of the kernel $g_S(x,y)$ depends on $r_1$ and $r_2$, and is independent of $|\mathcal{K}_1|$ and $|\mathcal{K}_2|$. This is an attractive feature of the 2-D SMS kernels in the sense that they can reproduce exponentials $\{e^{\mathrm{j}k_1\Omega_{0x} x+\mathrm{j}k_2\Omega_{0y} y}\}_{k_1\in \mathcal{K}_1, k_2 \in \mathcal{K}_2}$, wherein the support of the kernel $g_S(x,y)$ is independent of the order.
\begin{figure}[t]
\begin{minipage}[b]{.48\linewidth}
  \centering
  \centerline{\includegraphics[width=0.65\linewidth,trim={2.4cm 0 1cm 0},clip]{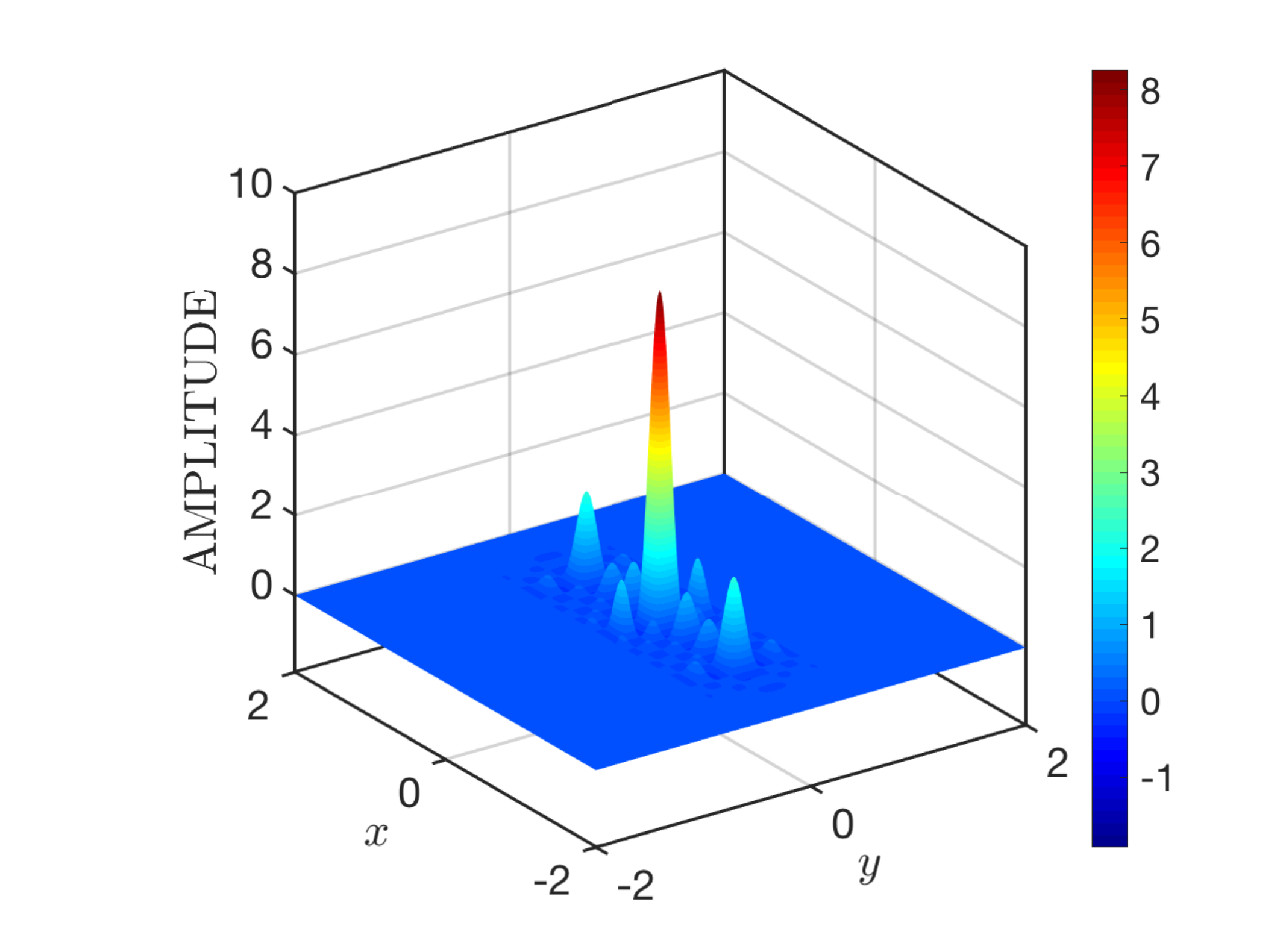}}
%  \vspace{1.5cm}
  \centerline{(a)}\medskip
\end{minipage}
\hfill
\begin{minipage}[b]{0.48\linewidth}
  \centering
  \centerline{\includegraphics[width=0.65\linewidth,trim={2.4cm 0 1cm 0},clip]{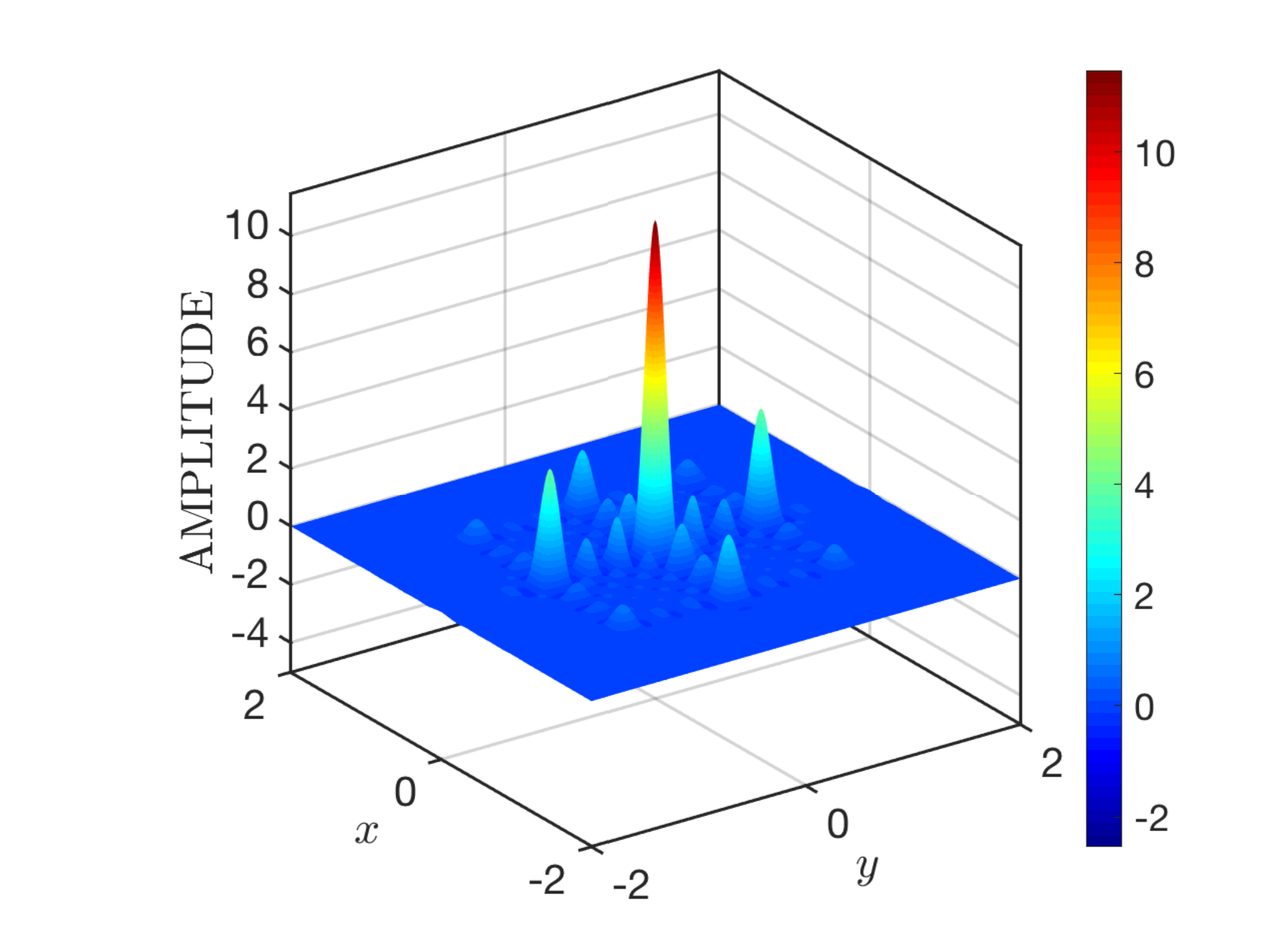}}
%  \vspace{1.5cm}
  \centerline{(b)}\medskip
\end{minipage}
\hfill
\caption{[Color online] Impulse response of separable 2-D SMS kernel:  $\mathcal{K}_1=\mathcal{K}_2 = \llbracket -L,L\rrbracket$ for $L = 2$, $\Omega_{0x}=\Omega_{0y} = 2\pi$, and (a) $r_1 = 4, r_2 = 1$ and (b) $r_1 = 3, r_2 = 5$.}
\label{fig:time_dom_separable}
\end{figure}
\subsection{Nonseparable Kernels}
We consider a nonseparable $Z(s_x,s_y)$ and a suitable $U(s_x,s_y)$  that results in a nonseparable $G(s_x,s_y)$ and consequently, a compactly supported $g(x,y)$. The following proposition summarizes the result.
\begin{prop}
\label{prop:non_separable}
Let $\displaystyle Z(s_x,s_y)=\sinh\left(\frac{\pi s_x}{\Omega_{0x}} + \frac{\pi s_y}{\Omega_{0y}}\right)\sinh\left(\frac{\pi s_y}{\Omega_{0y}} - \frac{\pi s_x}{\Omega_{0x}}\right)\, \text{and}\\ \,$ $ U(s_x,s_y) \hspace{-0.25mm}= \hspace{-1.5mm}\prod \limits_{\substack{k_1\in\mathcal{K}_1 \\ k_2\in\mathcal{K}_2}}\hspace{-1.5mm}\left(\hspace{-0.7mm}\frac{s_x}{\Omega_{0x}} \hspace{-0.3mm}+\hspace{-0.3mm} \frac{s_y}{\Omega_{0y}} \hspace{-0.3mm}-\hspace{-0.3mm} \mathrm{j}k_1 \hspace{-0.3mm}-\hspace{-0.3mm} \mathrm{j}k_2\hspace{-0.7mm} \right)  \left(\hspace{-0.7mm}\frac{s_y}{\Omega_{0y}} \hspace{-0.3mm}-\hspace{-0.3mm} \frac{s_x}{\Omega_{0x}} \hspace{-0.3mm}-\hspace{-0.3mm} \mathrm{j}k_2 \hspace{-0.3mm}+\hspace{-0.3mm} \mathrm{j}k_1 \hspace{-0.7mm}\right)$. Then $\displaystyle G(s_x,s_y)$ satisfies the alias-cancellation conditions \eqref{eq:alias_cancle_1} and \eqref{eq:alias_cancle_2}, and $g(x,y)$ is supported over $\left[-{T_{0x}}, {T_{0x}}\right] \times \left[-{T_{0y}}, {T_{0y}}\right]$, where $T_{0x}=\frac{2\pi}{\Omega_{0x}}$ and $T_{0y}=\frac{2\pi}{\Omega_{0y}}$. 
\end{prop}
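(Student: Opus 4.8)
The plan is to remove the apparent non-separability by a linear change of frequency variables, after which the argument runs parallel to the separable case (Proposition~\ref{prop:separable}) and to the $1$-D analysis of \cite{satish_paley_wiener}. Introduce $u=\frac{s_x}{\Omega_{0x}}+\frac{s_y}{\Omega_{0y}}$ and $v=\frac{s_y}{\Omega_{0y}}-\frac{s_x}{\Omega_{0x}}$. Then $Z=\sinh(\pi u)\sinh(\pi v)$, and because the double product distributes, $U$ itself separates as $U=U_1(u)\,U_2(v)$ with $U_1(u)=\prod_{k_1,k_2}(u-\mathrm{j}(k_1+k_2))$ and $U_2(v)=\prod_{k_1,k_2}(v-\mathrm{j}(k_2-k_1))$. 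Consequently $G=V\,A(u)\,B(v)$, where $A(u)=\sinh(\pi u)/U_1(u)$ and $B(v)=\sinh(\pi v)/U_2(v)$ each have the form of a $1$-D SMS transfer function. The whole nonseparability is thus carried by the affine map $(s_x,s_y)\mapsto(u,v)$, and I would analyze $A$ and $B$ by the $1$-D results, keeping track only of this coupling.

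For the alias-cancellation conditions I would evaluate $u,v$ on the lattice. On $\mathcal{S}$, i.e.\ at $(s_x,s_y)=(\mathrm{j}k_1\Omega_{0x},\mathrm{j}k_2\Omega_{0y})$, one gets $u=\mathrm{j}(k_1+k_2)$ and $v=\mathrm{j}(k_2-k_1)$, both purely imaginary integer multiples of $\mathrm{j}$; here $\sinh(\pi u)$ and $\sinh(\pi v)$ vanish and these are exactly roots of $U_1$ and $U_2$. Matching the order of the numerator zero to the order of the pole of $U$ gives a finite, nonzero limit and establishes \eqref{eq:alias_cancle_1}. At an aliased point the integers become $u/\mathrm{j}=k_1+k_2+m_1N_1+m_2N_2$ and $v/\mathrm{j}=k_2-k_1-m_1N_1+m_2N_2$, where $N_1=\Omega_{sx}/\Omega_{0x}\ge|\mathcal{K}_1|$ and $N_2=\Omega_{sy}/\Omega_{0y}\ge|\mathcal{K}_2|$. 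A counting argument, using the rate bounds together with the hypothesis that \emph{both} $m_1,m_2$ are nonzero (which forbids the ``pure shift'' combinations that keep both $u$ and $v$ inside their root ranges), shows that at least one of these integers is pushed outside the root set of $U_1$, resp.\ $U_2$. That factor of $\sinh$ then contributes a zero with no compensating pole, forcing $G=0$ and giving \eqref{eq:alias_cancle_2}.

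The support follows from the $2$-D Paley--Wiener theorem (Theorem~\ref{theo:paley_multi}). Since $|\sinh(\pi w)|\le e^{\pi|w|}$ and $|u|,|v|\le \frac{|s_x|}{\Omega_{0x}}+\frac{|s_y|}{\Omega_{0y}}$, I would estimate
\[
|Z|=|\sinh(\pi u)\sinh(\pi v)|\le e^{\pi(|u|+|v|)}\le e^{2\pi\left(\frac{|s_x|}{\Omega_{0x}}+\frac{|s_y|}{\Omega_{0y}}\right)}=e^{T_{0x}|s_x|+T_{0y}|s_y|}.
\]
As $V/U$ is a proper rational function (its degrees arranged so that $G\in L^2$ on the imaginary axes), it is bounded and leaves the exponential type unchanged, so $|G(s_x,s_y)|<C\,e^{T_{0x}|s_x|+T_{0y}|s_y|}$. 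By Theorem~\ref{theo:paley_multi} this confines $g$ to $[-T_{0x},T_{0x}]\times[-T_{0y},T_{0y}]$, as claimed; note the factor of two relative to the $r_1=r_2=1$ separable kernel comes precisely from the two mixed arguments $\pi u$ and $\pi v$.

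The delicate steps are both hidden in the word ``matching'' above. First, the \emph{entireness} of $G$: the roots of $U_1$ (and $U_2$) generally carry multiplicity greater than one --- for instance $u=0$ is contributed by every pair with $k_1=-k_2$ --- whereas $\sinh$ supplies only simple zeros, so $G$ is pole-free only if $V$ is chosen to absorb the excess order along the diagonal lines $\{u\in\mathrm{j}\mathbb{Z}\}$, $\{v\in\mathrm{j}\mathbb{Z}\}$. Verifying that such a $V$ exists while the net numerator and denominator orders still coincide at every point of $\mathcal{S}$ (so that \eqref{eq:alias_cancle_1} is not destroyed) is the main technical obstacle and has no analogue in the separable proof, where the lattice is unaliased and all multiplicities are controlled factorwise. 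Second, the alias-escape count must hold uniformly over all $k_1,k_2$ and all admissible nonzero $(m_1,m_2)$ using only $N_1\ge|\mathcal{K}_1|$, $N_2\ge|\mathcal{K}_2|$; because a shift in $s_x$ alone displaces both $u$ and $v$, this bookkeeping is genuinely two-dimensional and is the part I expect to require the most care.
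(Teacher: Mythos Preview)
Your change of variables $(s_x,s_y)\mapsto(u,v)$ is exactly the move the paper makes, though the paper deploys it only inside the $L^2$ computation (as a $45^\circ$ rotation) rather than up front. Your exponential-type bound on $Z$ is likewise the paper's argument, written more compactly via $|\sinh(\pi w)|\le e^{\pi|w|}$; the paper spells out the same chain of inequalities termwise. So on the EFET half of the support claim you and the paper proceed identically.

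There is, however, a genuine gap. Theorem~\ref{theo:paley_multi} requires two things: the EFET bound \emph{and} $G(\mathrm{j}\Omega_x,\mathrm{j}\Omega_y)\in L^2(\mathbb{R}^2)$. You address the first but dispose of the second with the remark that ``$V/U$ is a proper rational function\dots it is bounded.'' This is false: $V/U$ has poles along every line $u\in\mathrm{j}\mathbb{Z}$ and $v\in\mathrm{j}\mathbb{Z}$, so it is not bounded, and the square-integrability of $G$ is not automatic. In the paper this is the main labour of the proof: $G_{NS}$ is written by partial fractions as the finite sum of shifted sinc products in~\eqref{eq:G_omega_non_sepa}, after which Cauchy--Schwarz together with the very rotation you propose reduces $\int\!\!\int|G_{NS}|^2$ to a product of one-dimensional $\int|\mathrm{sinc}|^2$ integrals. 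Crucially, that explicit sinc decomposition also dissolves the multiplicity obstacle you flag at the end: once $G$ is a sum of terms $\mathrm{sinc}(\cdot)\,\mathrm{sinc}(\cdot)$, each summand is manifestly entire and in $L^2$, so neither entireness nor square-integrability needs to be argued abstractly from multiplicities in $U_1,U_2$. Your instinct that a generic $V$ will not do is correct; the partial-fraction form (equivalently, the specific $V$ displayed just below the proposition) is what makes the cancellation work, and the paper simply computes with it rather than characterising the admissible $V$'s.

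On the alias-cancellation conditions: the paper's appendix does not verify \eqref{eq:alias_cancle_1}--\eqref{eq:alias_cancle_2} at all, treating them as built into the construction ($Z$ vanishes on the whole lattice, $1/U$ has poles only on $\mathcal{S}$). Your attempt therefore goes beyond what the paper supplies, and the difficulty you isolate---a shift in $s_x$ alone moves both $u$ and $v$, so the escape count is genuinely two-dimensional---is real. But again the sinc form \eqref{eq:G_omega_non_sepa} short-circuits it: at an aliased lattice point both sinc arguments of each summand are integers, and one checks directly (using $\Omega_{sx}/\Omega_{0x}\ge|\mathcal{K}_1|$, $\Omega_{sy}/\Omega_{0y}\ge|\mathcal{K}_2|$, and $m_1,m_2\neq0$) that they cannot both be zero, so every term vanishes. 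That direct evaluation is much lighter than the abstract counting you outline.
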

The proof involves two steps: showing that (i) $G(s_x,s_y)$ is an EFET; and (ii) $G(\mathrm{j}\Omega_x, \mathrm{j}\Omega_y)\in L^2(\mathbb{R}^2)$, and is provided in Appendix \ref{appendix:prop_nonseparable}. Substituting the functions $Z(s_x,s_y)$ and $U(s_x,s_y)$ of Proposition \ref{prop:non_separable} in (\ref{eq:gen_samp_kernel}), and using the partial fraction decomposition of $\frac{V(s_x,s_y)}{U(s_x,s_y)}$ and rotation property of the 2-D CTFT, it can be shown that (see Appendix \ref{appendix:kernel_nonseparable_expression}) the frequency response of the sampling kernel in Proposition \ref{prop:non_separable} is given by 
\begin{align}
G_{NS}(\mathrm{j}\Omega_x, \mathrm{j}\Omega_y)= \,\, \pi^2 \sum_{k_1\in\mathcal{K}_1}\sum_{k_2\in\mathcal{K}_2}q_{k_1,k_2} \text{sinc}\left(\frac{\Omega_x - k_1\,\Omega_{0x}}{\Omega_{0x}}+\frac{\Omega_y - k_2\,\Omega_{0y}}{\Omega_{0y}}\right)
\text{sinc}\left(\frac{\Omega_y - k_2\,\Omega_{0y}}{\Omega_{0y}}\,-\, \frac{\Omega_x - k_1\,\Omega_{0x}}{\Omega_{0x}}\right), 
\label{eq:G_omega_non_sepa}
\end{align}
where $ q_{k_1,k_2}$ are the partial fraction decomposition coefficients. The corresponding spatial-domain kernel is 
\begin{align}
g_{NS}(x,y)=\frac{\Omega_{0x}\Omega_{0y}}{8}\,\,\text{rect}\left( \frac{\Omega_{0x}x + \Omega_{0y}y}{4\pi}\right)\text{rect}\left( \frac{\Omega_{0y}y - \Omega_{0x}x}{4\pi}\right)  \sum_{k_1\in\mathcal{K}_1}\sum_{k_2\in\mathcal{K}_2}q_{k_1,k_2}\, e^{\mathrm{j}(k_1\Omega_{0x}x + k_2\Omega_{0y}y)},
\label{eq:s_non_sepa}
\end{align}
where $\text{rect}(\cdot)$ is defined as  $\text{rect}(\frac{x}{T_x}) = 1$, if $|x| \leq T_x/2$, and $0$ otherwise. The sampling kernel derived in (\ref{eq:s_non_sepa}) is for a generic polynomial function $V(s_x,s_y)$, except that it does not have zeros on $\mathcal{S}$. For the particular choice \\
$V(\mathrm{j}\Omega_x, \mathrm{j}\Omega_y) =\hspace{-1.5mm}\sum \limits_{\substack{p_1\in\mathcal{K}_1 \\ p_2\in\mathcal{K}_2}}\hspace{-1mm} d_{p_1,p_2} 
\hspace{-1mm} \prod \limits_{\substack{k_1\in\mathcal{K}_1\backslash\{p_1\} \\ k_2\in\mathcal{K}_2\backslash\{p_2\}}}
\hspace{-1mm}\left(\frac{\mathrm{j}\,\Omega_x}{\Omega_{0x}} + \frac{\mathrm{j}\,\Omega_y}{\Omega_{0y}} - \mathrm{j}k_1 - \mathrm{j}k_2 \right) ~ \left(\frac{\mathrm{j}\,\Omega_y}{\Omega_{0y}} - \frac{\mathrm{j}\,\Omega_x}{\Omega_{0x}} - \mathrm{j}k_2 + \mathrm{j}k_1 \right)$,
we get $q_{k_1,k_2} = d_{k_1,k_2}$. \\
\subsubsection{Discussion}
Even though the nonseparable kernel $g_{NS}(x,y)$ in (\ref{eq:s_non_sepa}) appears to be a rotated version of the separable kernel $g_S(x,y)$ in (\ref{eq:2-D_SMS}) with $r_1=r_2=1$, a closer observation reveals something more. If we rotate the kernel $g_S(x,y)$ or equivalently $G_S(\mathrm{j}\Omega_x, \mathrm{j}\Omega_y)$, then the zeros of the kernel will also shift in the 2-D plane. Hence, the alias cancellation conditions specified in (\ref{eq:alias_cancle_1}) and (\ref{eq:alias_cancle_2}) are no more satisfied on the 2-D rectangular grid, but are valid on the rotated 2-D grid. This means, to counter the effect of rotation, the sampling mechanism and the reconstruction techniques have to be suitably modified. On the other hand, in the case of the proposed nonseparable kernel $G_{NS}(\mathrm{j}\Omega_x, \mathrm{j}\Omega_y)$ in (\ref{eq:G_omega_non_sepa}), the alias cancellation conditions are met on the 2-D grid, and the usual sampling and reconstruction techniques that are applicable in the case of separable kernels could be deployed. 

Unlike the separable kernels, in the case of a nonseparable kernel $g_{NS}(x,y)$, we have considered only polynomial B-splines of zeroth order. Developing nonseparable kernels that are a sum of modulated-splines of higher orders needs more investigation and makes an interesting case for future work. Design of such kernels might result in their impulse responses being non-isotropic, and could be employed to approximate a more wider class of point-spread functions in the imaging modalities such as localization microscopy, radio astronomy, etc. Another aspect that is worthy for future investigation is the analysis and design of nonseparable  kernels that reproduce exponentials of the form $e^{\mathrm{j}\alpha xy}$ for some $\alpha$.

\section{Simulation Results}
\begin{figure}[t]
\begin{minipage}[b]{.48\linewidth}
  \centering
  \centerline{\includegraphics[width=0.65\linewidth,trim={2.4cm 0 1cm 0},clip]{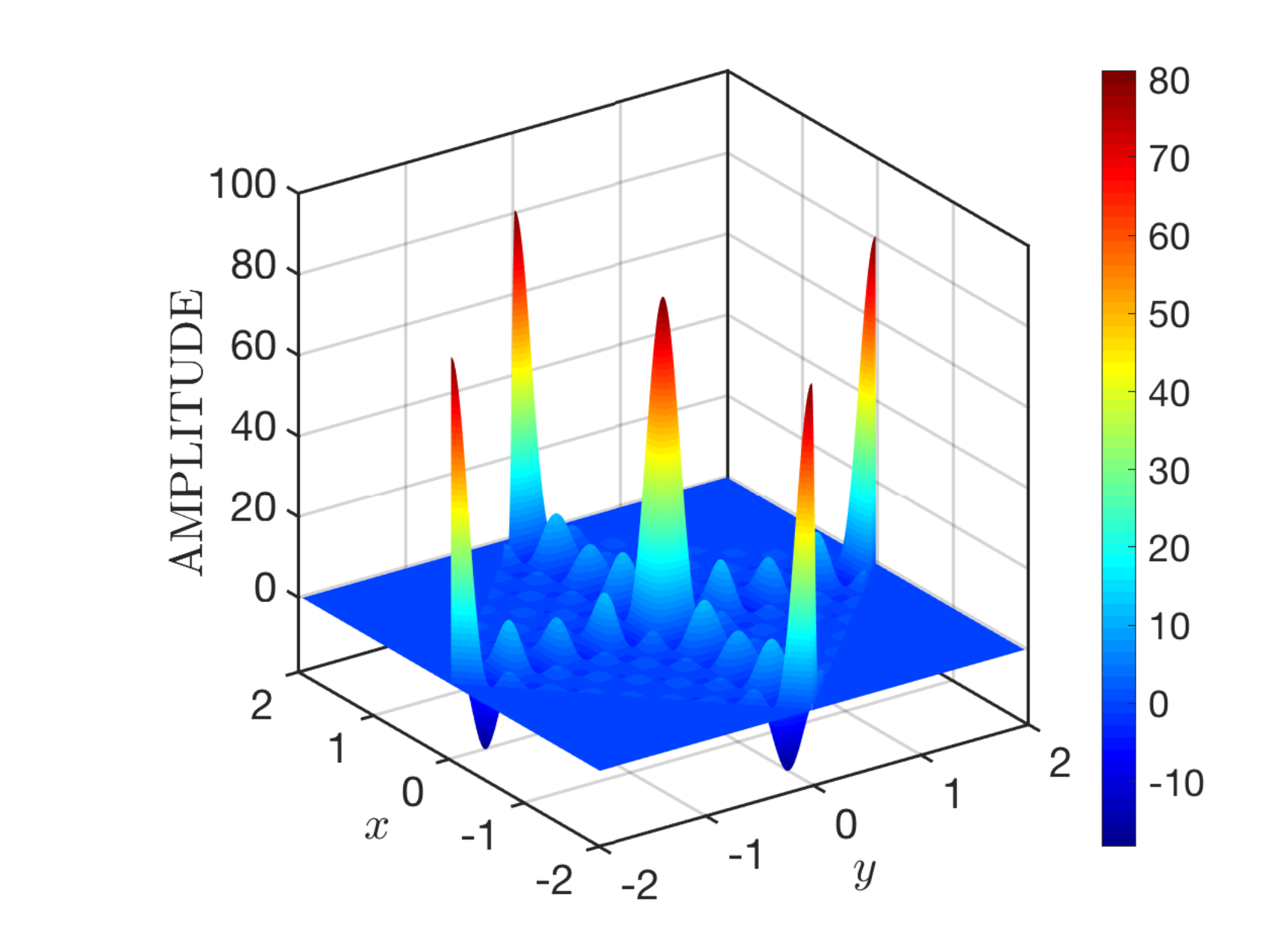}}
 % \vspace{-1mm}
  \centerline{(a)}\medskip
\end{minipage}
\hfill
\begin{minipage}[b]{0.48\linewidth}
  \centering
  \centerline{\includegraphics[width=0.65\linewidth,trim={2.4cm 0 1cm 0},clip]{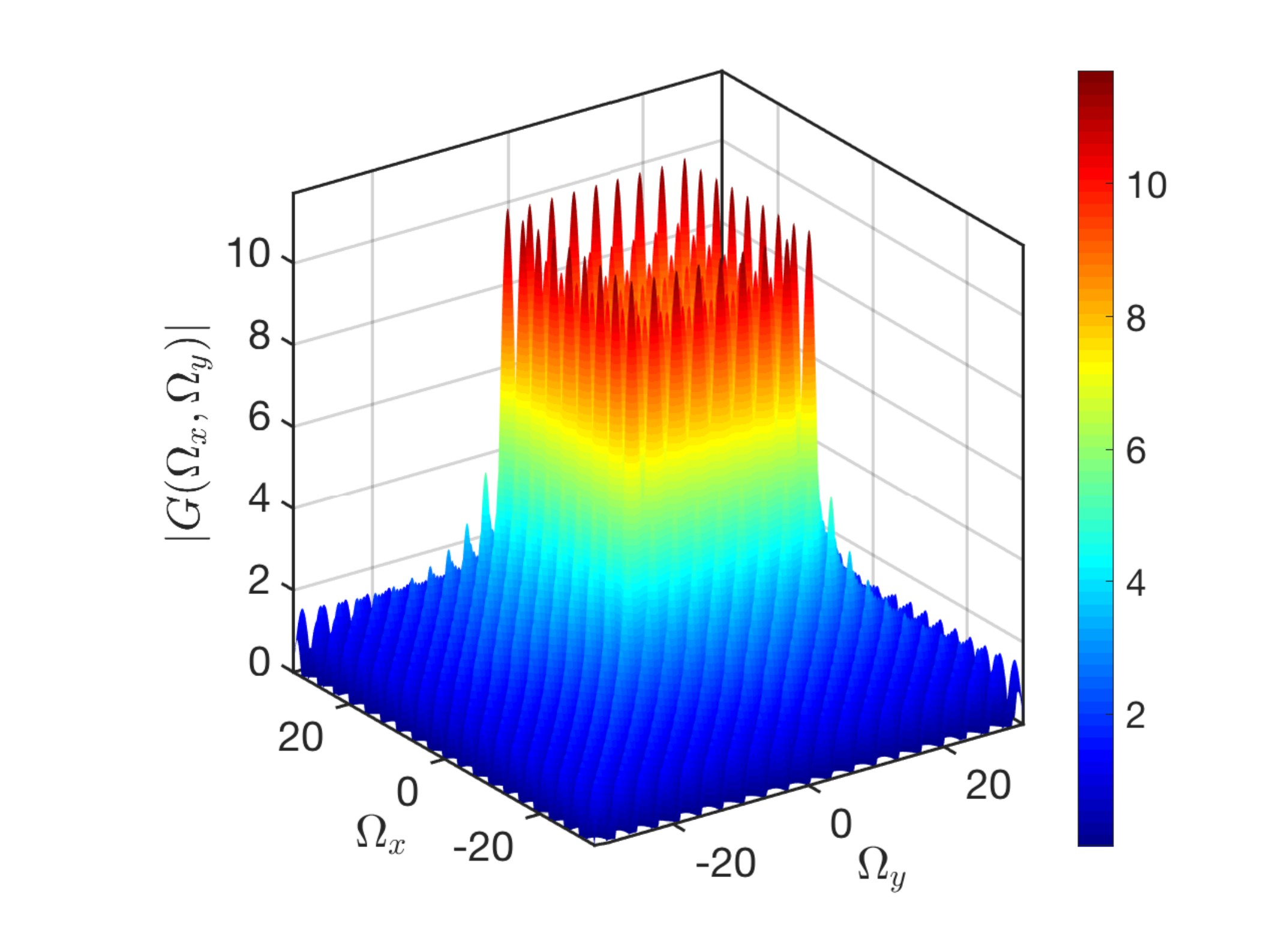}}
 %\vspace{-1mm}
  \centerline{(b)}\medskip
\end{minipage}
\begin{minipage}[b]{0.48\linewidth}
  \centering
  \centerline{\includegraphics[width=0.65\linewidth,trim={2.4cm 0 1cm 0},clip]{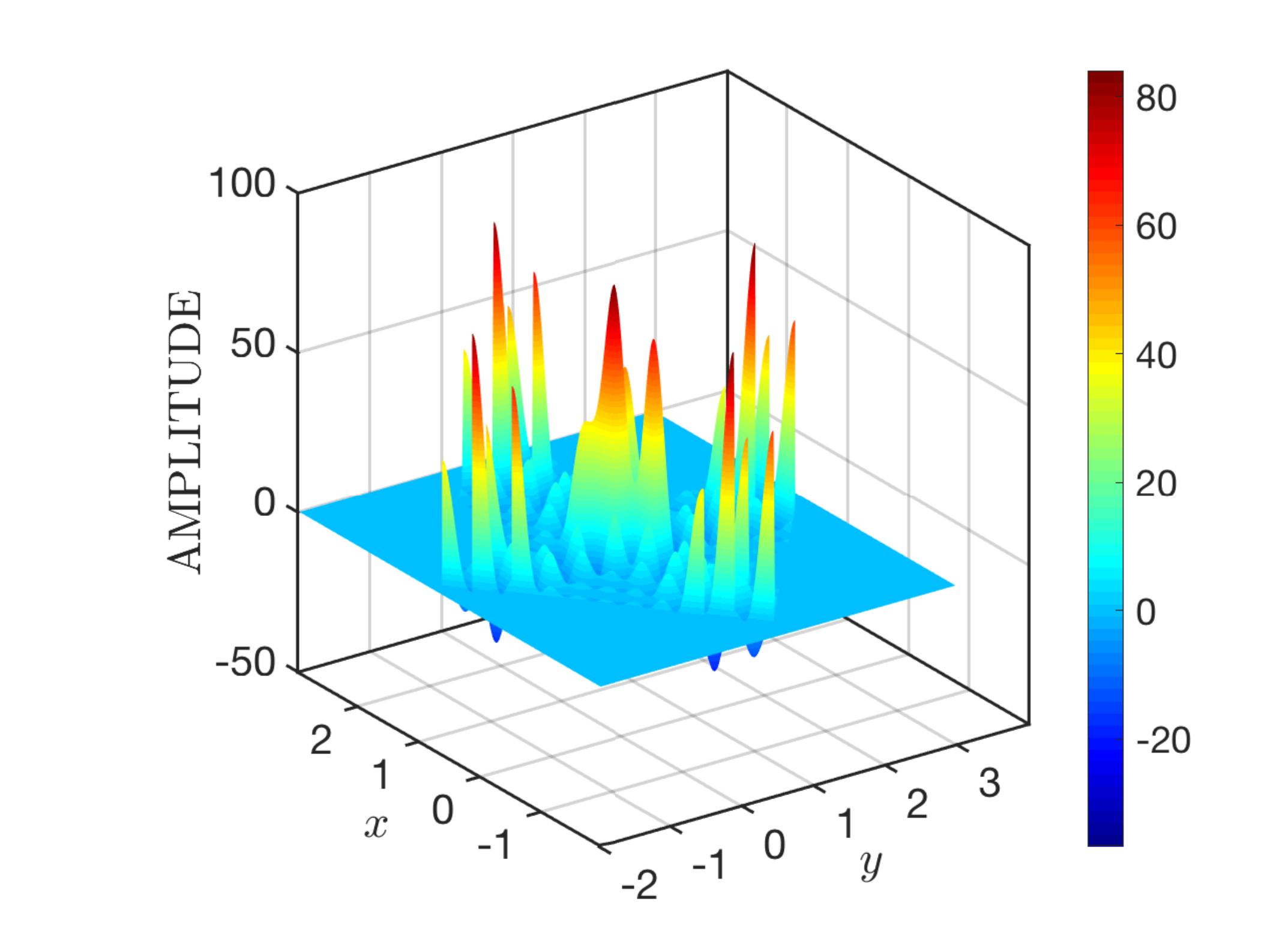}}
 %\vspace{-1mm}
  \centerline{(c)}\medskip
\end{minipage}
\hfill
\begin{minipage}[b]{0.48\linewidth}
  \centering
  \centerline{\includegraphics[width=0.65\linewidth,trim={2.4cm 0 1cm 0},clip]{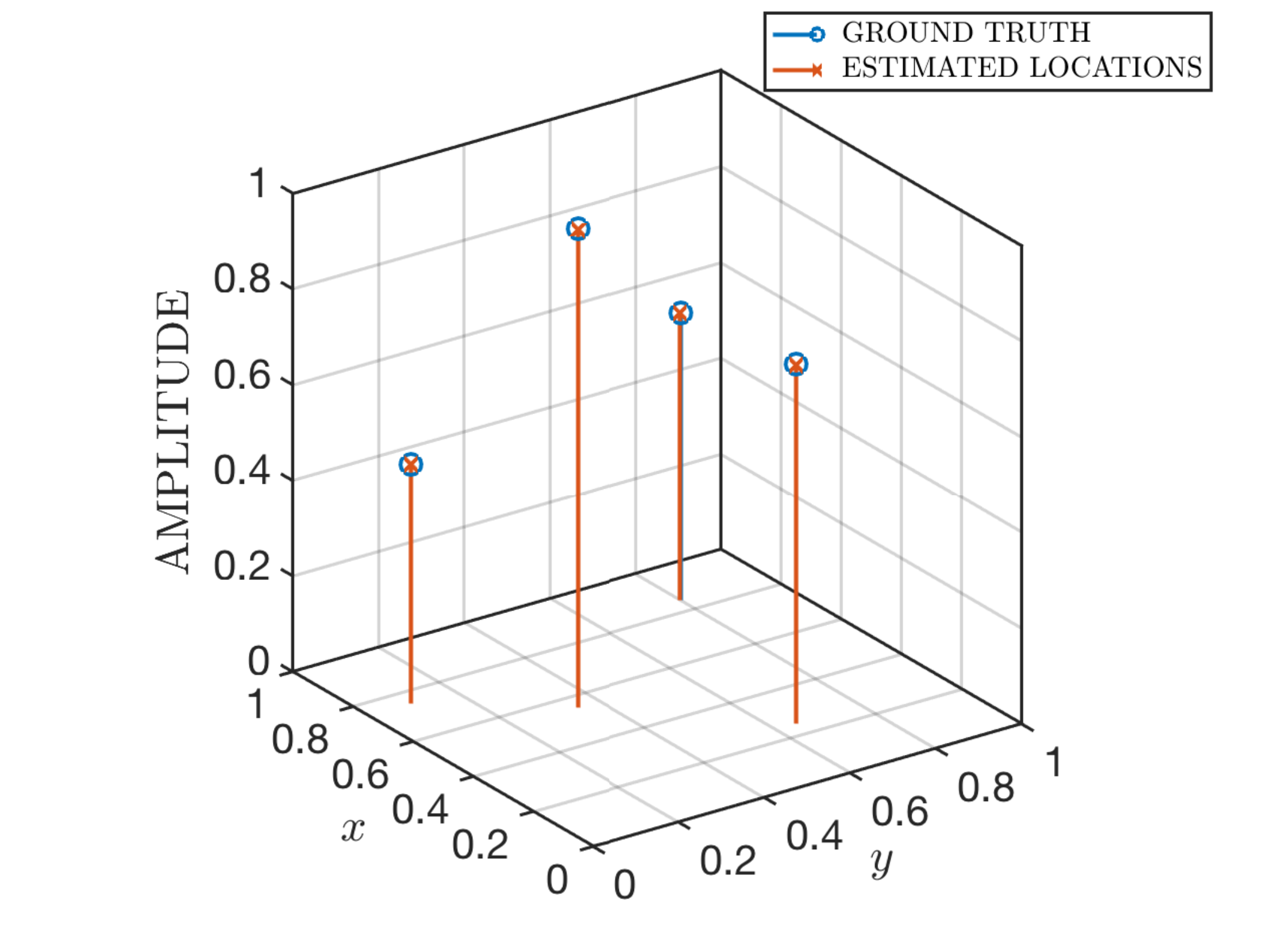}}
 %\vspace{-1mm}
  \centerline{(d)}\medskip
\end{minipage}
\hfill
\caption{[Color online] (a) Impulse response; (b) Frequency response of the nonseparable sampling kernel $g_{NS}(x,y)$ in (\ref{eq:s_non_sepa}) with $\mathcal{K}_1 =\mathcal{K}_2 = \llbracket -L,L\rrbracket$ for $L=4$, and $\Omega_{0x} =\Omega_{0y} = \frac{\pi}{0.99}$; (c) Spatial-domain signal obtained by convolving the Diracs with the sampling kernel; and (d) The ground truth and reconstructed Dirac locations.}
\label{fig:dirac_sim}
\end{figure}

\begin{figure}[t]
\begin{minipage}[b]{.48\linewidth}
  \centering
  \centerline{\includegraphics[width=0.65\linewidth,trim={2.4cm 0 1cm 0},clip]{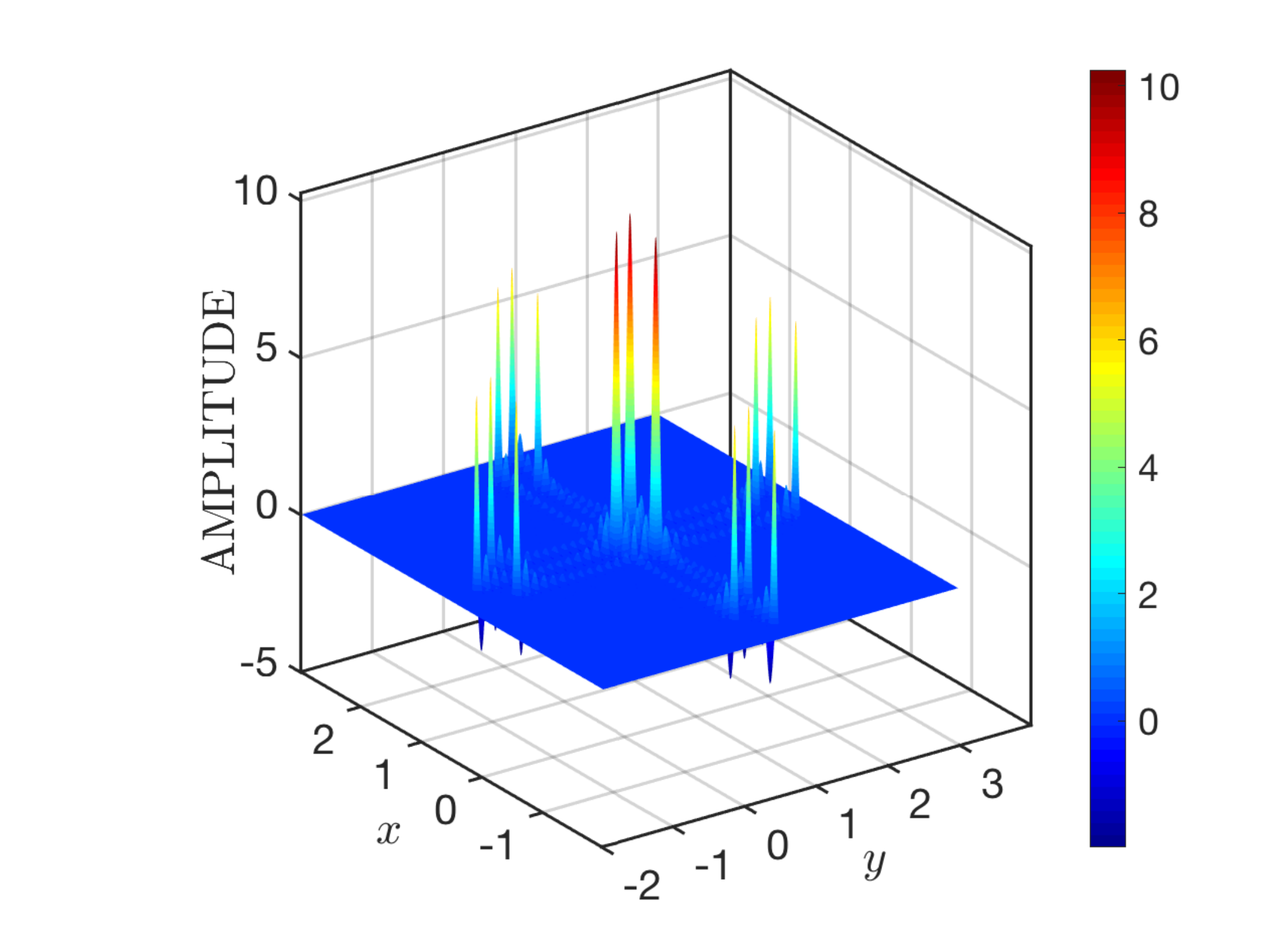}}
 \vspace{-2mm}
  \centerline{(a)}\medskip
\end{minipage}
\hfill
\begin{minipage}[b]{0.48\linewidth}
  \centering
  \centerline{\includegraphics[width=0.65\linewidth,trim={1.4cm 0 1cm 0},clip]{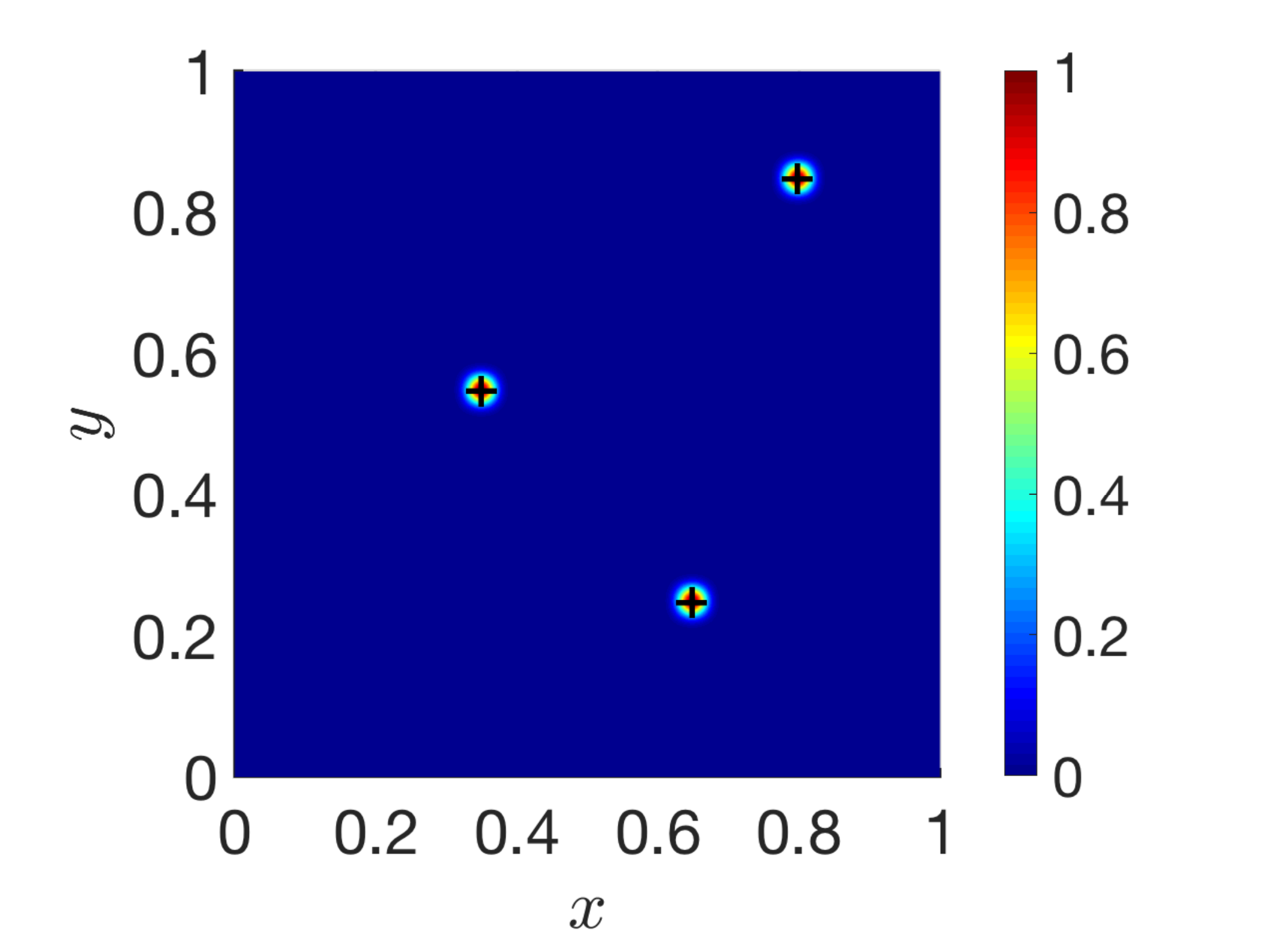}}
  \vspace{-2mm}
  \centerline{(b)}\medskip
\end{minipage}
\hfill
\caption{[Color online] Localization of Gaussian blobs: (a) Result of convolution of three ($L=3$) truncated Gaussian functions with the nonseparable sampling kernel with $\mathcal{K}_1 = \mathcal{K}_2 = \llbracket -5L,5L\rrbracket$ and $\Omega_{0x} = \Omega_{0y} = \frac{\pi}{0.99}$; and  (b) Original 2-D FRI signal with the estimated locations (marked `+'). The SNR of the measurements used for reconstruction was $15$ dB and the MSE in the estimation of locations (over 50 realizations) was computed to be $-55$ dB.}
\label{fig:spline_sim}
\end{figure}

We conducted two experiments to demonstrate the applicability of the proposed non-separable sampling kernel. In the first experiment, we consider a 2-D signal having four Diracs $(L = 4)$ with parameters $\{\gamma_\ell,x_\ell,\, y_\ell\}_{\ell = 1}^{4}$ that are selected uniformly at random over $(0\,,\,1)$. A compactly supported nonseparable kernel as in (\ref{eq:s_non_sepa}) with the support restricted to $\left[-{T_{0x}}, {T_{0x}}\right] \times \left[-{T_{0y}}, {T_{0y}}\right]$ is simulated with the following parameters:  $\mathcal{K}_1 = \mathcal{K}_2 =\llbracket -4,4\rrbracket$,  $\Omega_{0x} = \Omega_{0y} =\frac{\pi}{0.99}$. The spatial domain samples are acquired at the critical sampling rate of  $\Omega_{sx} = (2L + 1)\Omega_{0x}$ and $\Omega_{sy} = (2L + 1)\Omega_{0y}$. The parameters were estimated using the algebraically coupled matrix pencil reconstruction method \cite{2DHarmonicRetrieval_vanpoucke1994}. The input signal along with the estimated Dirac locations, the impulse and frequency response of the sampling kernel, and convolution output of the signal and the sampling kernel are shown in Fig. \ref{fig:dirac_sim}. The mean-square error in the estimation of Dirac locations was computed to be $-280$ dB implying perfect reconstruction up to machine precision. 

In the second experiment, we consider a 2-D FRI signal consisting of  three truncated Gaussian functions located at $\{x_\ell,\, y_\ell\}_{\ell = 1}^{3}$ that are selected uniformly at random over $(0\,,\,1)$. Zero-mean,  additive white Gaussian noise is added to the spatial-domain samples such that the signal-to-noise ratio (SNR) of the resulting samples is $15$ dB. We oversample the signal with the sampling rates $\Omega_{sx} = (10L + 1)\Omega_{0x}$ and $\Omega_{sy} = (10L + 1)\Omega_{0y}$. Figure \ref{fig:spline_sim} shows the ground truth signal with the accurately estimated locations of the Gaussian blobs. A comprehensive assessment of noise robustness of the kernels vis-\`a-vis the Cram\'er-Rao bounds will be addressed in a future work.

\section{Conclusions}
In this paper, we proposed a generalized framework for designing compactly supported sampling kernels for 2-D FRI signals. The first key idea in this generalization was to design the frequency response of the kernel, which satisfies a set of alias cancellation conditions; and second, to characterize admissible kernels with a compact spatial support by invoking the 2-D Paley-Wiener theorem.  The proposed framework allows for the design of both separable and nonseparable 2-D sampling kernels. As a particular case, we showed that a special case of the separable sampling kernel is the class of 2-D SMS kernels,  which has the attractive feature of reproducing a certain class of exponentials and  the support of the kernel is independent of the order. We also demonstrated the design of a nonseparable kernel
and validated it by performing simulations to extract the exact locations of Diracs in the 2-D plane. The design of higher-order nonseparable kernels and analysis of their exponential reproducing properties are some interesting aspects for further study.

% if have a single appendix:
%\appendix[Proof of the Zonklar Equations]
% or
%\appendix  % for no appendix heading
% do not use \section anymore after \appendix, only \section*
% is possibly needed

% use appendices with more than one appendix
% then use \section to start each appendix
% you must declare a \section before using any
% \subsection or using \label (\appendices by itself
% starts a section numbered zero.)
%

% \pagebreak

\appendices
\section{Proof of Proposition \ref{prop:separable}}
\label{appendix:prop_separable}

The Proof of Proposition \ref{prop:separable} involves two steps: (i) showing $G_S(s_x,s_y)$ is an EFET; and (ii) showing $G_S(\mathrm{j}\Omega_x, \mathrm{j}\Omega_y) \in L^2(\mathbb{R})$.
\subsection{Proof that $G_S(s_x,s_y)$ is an EFET}
\label{appendix:G_efet_separable}
\begin{proof}[Proof]

For the separable case, we have
\begin{equation}
    Z_{r_1,r_2}(s_x,s_y) = \sinh^{r_1}\left(\frac{\pi s_x}{\Omega_{0x}} \right) \sinh^{r_2}\left(\frac{\pi s_y}{\Omega_{0y}}\right). \nonumber
\end{equation}

We know that

\begin{align}
    \left|\sinh\left(\frac{\pi s_x}{\Omega_{0x}} \right) \sinh\left(\frac{\pi s_y}{\Omega_{0y}}\right)\right| \leq& \left|\sinh\left(\frac{\pi s_x}{\Omega_{0x}} \right)\right| \left|\sinh\left(\frac{\pi s_y}{\Omega_{0y}}\right)\right|, \nonumber \\ 
    =& \left| \frac{e^{\frac{\pi s_x}{\Omega_{0x}}} - e^{ - \frac{\pi s_x}{\Omega_{0x}}} }{2} \right| \left|\frac{e^{\frac{\pi s_y}{\Omega_{0y}}} - e^{ - \frac{\pi s_y}{\Omega_{0y}}} }{2} \right|, \nonumber \\ 
    \leq& \frac{1}{4} \left( \left| e^{\frac{\pi s_x}{\Omega_{0x}}}\right| + \left|e^{ - \frac{\pi s_x}{\Omega_{0x}}} \right| \right) \left(\left|e^{\frac{\pi s_y}{\Omega_{0y}}} \right| + \left| e^{-\frac{\pi s_y}{\Omega_{0y}}} \right|\right), \nonumber \\
    \leq& \frac{1}{4} \left(  e^{\left|\frac{\pi s_x}{\Omega_{0x}}\right|} + e^{\left|\frac{\pi s_x}{\Omega_{0x}}\right|}  \right) \left(e^{\left|\frac{\pi s_y}{\Omega_{0y}}\right|}  +  e^{\left|\frac{\pi s_y}{\Omega_{0y}}\right|} \right), \nonumber \\
    =& \frac{1}{4} \left( 2\, e^{\pi\left|\frac{s_x}{\Omega_{0x}}\right|}\right) \left( 2\, e^{\pi\left|\frac{ s_y}{\Omega_{0y}}\right|}\right), \nonumber \\
    =& \, e^{\left(\frac{\pi}{\Omega_{0x}}\left|s_x \right|  + \frac{\pi}{\Omega_{0y}}\left|s_y\right|\right)} . \nonumber
    % \leq& \, e^{\frac{\pi}{\alpha} \left(\left|s_x\right| + \left|s_y\right|\right)} , \nonumber
\end{align}

Consequently, 
\begin{equation}
    \left|\sinh^{r_1}\left(\frac{\pi s_x}{\Omega_{0x}} \right) \sinh^{r_2}\left(\frac{\pi s_y}{\Omega_{0y}}\right) \right| \leq \,  e^{\left(\frac{r_1\pi}{\Omega_{0x}} \left|s_x \right|  + \frac{r_2\pi}{\Omega_{0y}}\left|s_y\right|\right)}. \nonumber
\end{equation}

Since $Z(\mathrm{j}\Omega_x, \mathrm{j}\Omega_y)$  is a sum of entire functions, it is an EFET, and consequently $G_S(\mathrm{j}\Omega_x, \mathrm{j}\Omega_y)$ is also an EFET (as shown in \cite{satish_paley_wiener}). The support of $g_S(x,y)$ is given by $\left[-\frac{r_1\pi}{\Omega_{0x}}, \frac{r_1\pi}{\Omega_{0x}}\right] \times \left[-\frac{r_2\pi}{\Omega_{0y}}, \frac{r_2\pi}{\Omega_{0y}}\right]$ i.e., $\Big[-\frac{r_1\,T_{0x}}{2}, \frac{r_1\,T_{0x}}{2}\Big] \times \left[-\frac{r_2\,T_{0y}}{2}, \frac{r_2\,T_{0y}}{2}\right]$.

\end{proof}

\subsection{Proof that $G_S(\mathrm{j}\Omega_x, \mathrm{j}\Omega_y) \in L^2(\mathbb{R})$}
\label{appendix:G_energy_separable}
\begin{proof}[Proof]

% For the separable case, we have
% \begin{equation}
% \centering
% G_S(\mathrm{j}\Omega_{0x},\mathrm{j}\Omega_{0y})=\sum_{k_1\in\mathcal{K}_1}\sum_{k_2\in\mathcal{K}_2}\sum_{m_1=1}^{r_1}\sum_{m_2=1}^{r_2}q_{k_1,k_2,m_1,m_2}\frac{\sin^{r_1}\left(\frac{\pi \Omega_x}{\Omega_{0x}} \right) \sin^{r_2}\left(\frac{\pi \Omega_y}{\Omega_{0y}}\right)}{(\Omega_{x}-k_1\Omega_{0x})^{m_1}(\Omega_{y}-k_1\Omega_{0y})^{m_2}}.
% \nonumber
% \end{equation}

Using the hyperbolic sine identity: $\sinh(\mathrm{j}\theta)=\mathrm{j}\sin(\theta)$, $\forall \,\theta\in \mathbb{R}$, $G_S(\mathrm{j}\Omega_{0x}, \mathrm{j}\Omega_{0y})$ can be expressed as
\begin{equation}
\centering
G_S(\mathrm{j}\Omega_{0x},\mathrm{j}\Omega_{0y}) = V(\mathrm{j}\Omega_{0x},\mathrm{j}\Omega_{0y})\,\mathrm{j}^{(r_1 (1 - |\mathcal{K}_1|) + r_2(1 - |\mathcal{K}_2|))}\frac{\sin^{r_1}\left(\frac{\pi \Omega_x}{\Omega_{0x}} \right) \sin^{r_2}\left(\frac{\pi \Omega_y}{\Omega_{0y}}\right)}{\prod \limits_{\substack{k_1\in\mathcal{K}_1 \\ k_2\in\mathcal{K}_2}} (\Omega_{x}-k_1\Omega_{0x})^{r_1}(\Omega_{y}-k_2\Omega_{0y})^{r_2}}. \nonumber
\end{equation}
Using partial fraction decomposition, $G_S(\mathrm{j}\Omega_{0x}, \mathrm{j}\Omega_{0y})$ takes the form
\begin{equation}
\centering
G_S(\mathrm{j}\Omega_{0x},\mathrm{j}\Omega_{0y})=\sum_{k_1\in\mathcal{K}_1}\sum_{k_2\in\mathcal{K}_2}\sum_{m_1=1}^{r_1}\sum_{m_2=1}^{r_2}\frac{q_{k_1,k_2,m_1,m_2}}{(\Omega_{x}-k_1\Omega_{0x})^{m_1}(\Omega_{y}-k_1\Omega_{0y})^{m_2}}\sin^{r_1}\left(\frac{\pi \Omega_x}{\Omega_{0x}} \right) \sin^{r_2}\left(\frac{\pi \Omega_y}{\Omega_{0y}}\right),
\label{gs_partial}
\end{equation}
where $q_{k_1,k_2,m_1,m_2}$ are the coefficients of the partial fraction expansion and are given by
\begin{equation}
\centering
q_{k_1,k_2,m_1,m_2}=\frac{\mathrm{j}^{(r_1 (1 - |\mathcal{K}_1|) + r_2(1 - |\mathcal{K}_2|))}}{(r_1-m_1)!(r_2-m_2)!}\frac{\text{d}^{(r_1-m_1)}}  {\text{d}\Omega_{x}^{(r_1-m_1)}} \frac{\text{d}^{(r_2-m_2)}}  {\text{d}\Omega_{y}^{(r_2-m_2)}} \left. \left( \frac{V(\mathrm{j}\Omega_{0x},\mathrm{j}\Omega_{0y})}{\prod \limits_{\substack{p_1\in\mathcal{K}_1\backslash\{k_1\} \\ p_2\in\mathcal{K}_2\backslash\{k_2\}}} (\Omega_{x}-p_1\Omega_{0x})^{r_1}(\Omega_{y}-p_1\Omega_{0y})^{r_2}} \right ) \right|_{\substack{\Omega_x=k_1\Omega_{0x} \\ \Omega_y=k_2\Omega_{0y}}},\nonumber
\end{equation}
for $m_1= 1, 2, \cdots, r_1$, and $m_2= 1, 2, \cdots, r_2$.

Using the Cauchy-Schwarts inequality, we have
\begin{align}
    \int\limits_{-\infty}^{\infty}\int\limits_{-\infty}^{\infty}|G_S(\mathrm{j}\Omega_x, \mathrm{j}\Omega_y)|^2 \text{d}\Omega_x\text{d}\Omega_y = & \int\limits_{-\infty}^{\infty}\int\limits_{-\infty}^{\infty} \left|\sum_{k_1\in\mathcal{K}_1}\sum_{k_2\in\mathcal{K}_2}\sum_{m_1=1}^{r_1}\sum_{m_2=1}^{r_2}q_{k_1,k_2,m_1,m_2}\frac{\sin^{r_1}\left(\frac{\pi \Omega_x}{\Omega_{0x}} \right) \sin^{r_2}\left(\frac{\pi \Omega_y}{\Omega_{0y}}\right)}{(\Omega_{x}-k_1\Omega_{0x})^{m_1}(\Omega_{y}-k_1\Omega_{0y})^{m_2}}\right|^2\text{d}\Omega_x\text{d}\Omega_y, \nonumber \\
    \leq& \,\,r_1 r_2 |\mathcal{K}_1||\mathcal{K}_2|\sum_{k_1\in\mathcal{K}_1}\sum_{k_2\in\mathcal{K}_2}\sum_{m_1=1}^{r_1}\sum_{m_2=1}^{r_2}\,|q_{k_1,k_2,m_1,m_2}|^2 \nonumber \\
    & \int\limits_{-\infty}^{\infty}\int\limits_{-\infty}^{\infty}\left|\frac{\sin^{r_1}\left(\frac{\pi \Omega_x}{\Omega_{0x}} \right) \sin^{r_2}\left(\frac{\pi \Omega_y}{\Omega_{0y}}\right)}{(\Omega_{x}-k_1\Omega_{0x})^{m_1}(\Omega_{y}-k_1\Omega_{0y})^{m_2}}\right|^2 \text{d}\Omega_x\text{d}\Omega_y, \nonumber \\
    = & \,\,r_1 r_2 |\mathcal{K}_1||\mathcal{K}_2|\sum_{k_1\in\mathcal{K}_1}\sum_{k_2\in\mathcal{K}_2}\sum_{m_1=1}^{r_1}\sum_{m_2=1}^{r_2}\,|q_{k_1,k_2,m_1,m_2}|^2  \int\limits_{-\infty}^{\infty}\left|\frac{\sin^{r_1}\left(\frac{\pi \Omega_x}{\Omega_{0x}} \right) }{(\Omega_{x}-k_1\Omega_{0x})^{m_1}}\right|^2 \text{d}\Omega_x \nonumber \\
    & \int\limits_{-\infty}^{\infty}\left|\frac{\sin^{r_2}\left(\frac{\pi \Omega_y}{\Omega_{0y}}\right)}{(\Omega_{y}-k_1\Omega_{0y})^{m_2}}\right|^2\text{d}\Omega_y. \nonumber
\end{align}

As shown in \cite{satish_paley_wiener}, $\int\limits_{-\infty}^{\infty}\left|\frac{\sin^{r_1}\left(\frac{\pi \Omega_x}{\Omega_{0x}} \right) }{(\Omega_{x}-k_1\Omega_{0x})^{m_1}}\right|^2 \text{d}\Omega_x < \infty$ and $\int\limits_{-\infty}^{\infty}\left|\frac{\sin^{r_2}\left(\frac{\pi \Omega_y}{\Omega_{0y}}\right)}{(\Omega_{y}-k_1\Omega_{0x})^{m_2}}\right|^2\text{d}\Omega_y < \infty$. Hence,  we have \\ $\int\limits_{-\infty}^{\infty}\int\limits_{-\infty}^{\infty}|G_S(\mathrm{j}\Omega_x, \mathrm{j}\Omega_y)|^2 \text{d}\Omega_x\text{d}\Omega_y < \infty$.

\end{proof}

\subsection{Expression for $g_S(x,y)$}
\label{appendix:kernel_separable_expression}

% \begin{proof}[Proof]
Consider 
\begin{equation}
\centering
G_S(\mathrm{j}\Omega_{0x},\mathrm{j}\Omega_{0y})=\sum_{k_1\in\mathcal{K}_1}\sum_{k_2\in\mathcal{K}_2}\sum_{m_1=1}^{r_1}\sum_{m_2=1}^{r_2}\frac{q_{k_1,k_2,m_1,m_2}}{(\Omega_{x}-k_1\Omega_{0x})^{m_1}(\Omega_{y}-k_1\Omega_{0y})^{m_2}}\sin^{r_1}\left(\frac{\pi \Omega_x}{\Omega_{0x}} \right) \sin^{r_2}\left(\frac{\pi \Omega_y}{\Omega_{0y}}\right).
\end{equation}
Further simplifying the above equation, we get 
\begin{align}
G_S(\mathrm{j}\Omega_{0x},\mathrm{j}\Omega_{0y})=&\sum_{k_1\in\mathcal{K}_1}\sum_{k_2\in\mathcal{K}_2}\sum_{m_1=1}^{r_1}\sum_{m_2=1}^{r_2} q_{k_1,k_2,m_1,m_2} (\Omega_x-k\Omega_{0x})^{r_1-m_1}(\Omega_y-k_2\Omega_{0y})^{r_2-m_2}\,\, \frac{\sin^{r_1}\left( \frac{\pi \Omega_x}{\Omega_{0x}} \right)}{(\Omega_x-k_1\Omega_{0x})^{r_1}}\frac{\sin^{r_2}\left( \frac{\pi \Omega_y}{\Omega_{0y}} \right)}{(\Omega_y-k\Omega_{0y})^{r_2}}, \nonumber \\
=&\sum_{k_1\in\mathcal{K}_1}\sum_{k_2\in\mathcal{K}_2}\sum_{m_1=1}^{r_1}\sum_{m_2=1}^{r_2} q_{k_1,k_2,m_1,m_2}(\Omega_x-k\Omega_{0x})^{r_1-m_1}(\Omega_y-k_2\Omega_{0y})^{r_2-m_2}(-1)^{r_1k_1 + r_2k_2} \nonumber \\
&\frac{\sin^{r_1}\left( \frac{ \Omega_x-k_1\Omega_{0x}}{\Omega_{0x}}\pi \right)}{(\Omega_x-k_1\Omega_{0x})^{r_1}}\frac{\sin^{r_2}\left( \frac{ \Omega_y-k_2\Omega_{0y}}{\Omega_{0y}}\pi \right)}{(\Omega_y-k_2\Omega_{0y})^{r_2}}, \nonumber\\
=& \left(\frac{\pi^{(r_1 + r2)}}{\Omega_{0x}^{r_1}\Omega_{0y}^{r_2}}\right)\sum_{k_1\in\mathcal{K}_1}\sum_{k_2\in\mathcal{K}_2}\sum_{m_1=1}^{r_1}\sum_{m_2=1}^{r_2} q_{k_1,k_2,m_1,m_2} (-1)^{r_1k_1 + r_2k_2} \sum_{c_1=0}^{r_1-m_1} {r_1-m_1 \choose c_1} (-k_1\Omega_{0x})^{r_1-m_1-c_1}\Omega_x^{c_1} \nonumber \\  & \text{sinc}^{r_1}\left( \frac{ \Omega_x-k_1\Omega_{0x}}{\Omega_{0x}} \right) \sum_{c_2=0}^{r_2-m_2} {r_2-m_2 \choose c_2} (-k_2\Omega_{0y})^{r_2-m_2-c_2}\Omega_y^{c_2} \text{sinc}^{r_2}\left( \frac{ \Omega_y-k_2\Omega_{0y}}{\Omega_{0y}} \right) . 
\label{eq:G1}
\end{align}
Now, applying the inverse Fourier transform on (\ref{eq:G1}), $g_S(x,y)$ is obtained as
\begin{align}
g_S(x,y)=& \left(\frac{\pi^{(r_1 + r2)}}{\Omega_{0x}^{r_1}\Omega_{0y}^{r_2}}\right) \sum_{k_1\in\mathcal{K}_1}\sum_{k_2\in\mathcal{K}_2}\sum_{m_1=1}^{r_1}\sum_{m_2=1}^{r_2} q_{k_1,k_2,m_1,m_2} (-1)^{r_1k_1 + r_2k_2} \sum_{c_1=0}^{r_1-m_1} {r_1-m_1 \choose c_1} (-k_1\Omega_{0x})^{r_1-m_1-c_1} \nonumber\\
&{\frac{1}{\mathrm{j}^{c_1}} \frac{\text{d}^{c_1}}{\text{d}x^{c_1}}\left[e^{\mathrm{j}k_1\Omega_{0x}x} \frac{1}{T_{0x}}\beta^{r_1-1} \left(\frac{x}{T_{0x}}\right) \right]} \sum_{c_2=0}^{r_2-m_2} {r_2-m_2 \choose c_2} (-k_2\Omega_{0y})^{r_2-m_2-c_2} {\frac{1}{\mathrm{j}^{c_2}} \frac{\text{d}^{c_2}}{\text{d}y^{c_2}}\left[e^{\mathrm{j}k_2\Omega_{0y}y} \frac{1}{T_{0y}}\beta^{r_2-1} \left(\frac{y}{T_{0y}}\right) \right]},
\label{eq:g_x_y_sepa}
\end{align}
where $\beta^{r}(t)$ denotes the $r^{\text{th}}$-order polynomial B-spline. 

Now, evaluating the above equation for $V(\mathrm{j}\Omega_{0x},\mathrm{j}\Omega_{0y})$ given in (\ref{eq:separable_V}), we arrive at the class of separable 2-D SMS kernels given in (\ref{eq:2-D_SMS}). 

% \end{proof}

\section{Proof of Proposition \ref{prop:non_separable}}
\label{appendix:prop_nonseparable}

The Proof of Proposition \ref{prop:non_separable} involves two steps: (i) showing $G_{NS}(s_x,s_y)$ is an EFET; and (ii) showing $G_{NS}(\mathrm{j}\Omega_x, \mathrm{j}\Omega_y) \in L^2(\mathbb{R})$. 
\subsection{Proof that $G_{NS}(s_x,s_y)$ is an EFET }
\label{appendix:G_efet_nonseparable}
\begin{proof}[Proof]

For the nonseparable case, we have 
\begin{equation}
    Z_{r_1,r_2}(s_x,s_y) = \sinh\left(\frac{\pi s_x}{\Omega_{0x}} + \frac{\pi s_y}{\Omega_{0y}}\right)\sinh\left(\frac{\pi s_y}{\Omega_{0y}} - \frac{\pi s_x}{\Omega_{0x}}\right). \nonumber
\end{equation}

We know that
\begin{align}
    \left|\sinh\left(\frac{\pi s_x}{\Omega_{0x}} + \frac{\pi s_y}{\Omega_{0y}}\right)\sinh\left(\frac{\pi s_y}{\Omega_{0y}} - \frac{\pi s_x}{\Omega_{0x}}\right)\right| \leq& \left|\sinh\left(\frac{\pi s_x}{\Omega_{0x}} + \frac{\pi s_y}{\Omega_{0y}}\right)\right| \left|\sinh\left(\frac{\pi s_y}{\Omega_{0y}} - \frac{\pi s_x}{\Omega_{0x}}\right)\right|, \nonumber \\ 
    =&\, \left| \frac{e^{\left(\frac{\pi s_x}{\Omega_{0x}} + \frac{\pi s_y}{\Omega_{0y}}\right)} - e^{ - \left(\frac{\pi s_x}{\Omega_{0x}} + \frac{\pi s_y}{\Omega_{0y}}\right)} }{2} \right| \left|\frac{e^{\left(\frac{\pi s_y}{\Omega_{0y}} - \frac{\pi s_x}{\Omega_{0x}}\right)} - e^{ - \left(\frac{\pi s_y}{\Omega_{0y}} - \frac{\pi s_x}{\Omega_{0x}}\right)} }{2} \right|, \nonumber \\ 
    \leq&\, \frac{1}{4} \left( \left| e^{\left(\frac{\pi s_x}{\Omega_{0x}} + \frac{\pi s_y}{\Omega_{0y}}\right)}\right| + \left|e^{ - \left(\frac{\pi s_x}{\Omega_{0x}} + \frac{\pi s_y}{\Omega_{0y}}\right)} \right| \right) \left(\left|e^{\left(\frac{\pi s_y}{\Omega_{0y}} - \frac{\pi s_x}{\Omega_{0x}}\right)} \right| + \left| e^{-\left(\frac{\pi s_y}{\Omega_{0y}} - \frac{\pi s_x}{\Omega_{0x}}\right)} \right|\right), \nonumber \\
    \leq& \,\frac{1}{4} \left(  e^{\left|\frac{\pi s_x}{\Omega_{0x}} + \frac{\pi s_y}{\Omega_{0y}}\right|} + e^{\left|\frac{\pi s_x}{\Omega_{0x}} + \frac{\pi s_y}{\Omega_{0y}}\right|}  \right) \left(e^{\left|\frac{\pi s_y}{\Omega_{0y}} - \frac{\pi s_x}{\Omega_{0x}}\right|}  +  e^{\left|\frac{\pi s_y}{\Omega_{0y}} - \frac{\pi s_x}{\Omega_{0x}}\right|} \right), \nonumber \\
    =& \,\frac{1}{4} \left( 2\, e^{\pi\left|\frac{ s_x}{\Omega_{0x}} + \frac{ s_y}{\Omega_{0y}}\right|}\right) \left( 2\, e^{\pi\left|\frac{ s_y}{\Omega_{0y}} - \frac{ s_x}{\Omega_{0x}}\right|}\right), \nonumber \\
    \le& \, \left(e^{\pi \left(\left|\frac{s_x}{\Omega_{0x}}\right| + \left|\frac{s_y}{\Omega_{0y}}\right|\right)} \right)\left( e^{\pi \left(\left|\frac{s_y}{\Omega_{0y}}\right| + \left|\frac{s_x}{\Omega_{0x}}\right|\right)} \right), \nonumber \\
    =& \, e^{\left(\frac{2\pi}{\Omega_{0x}}\left|s_x \right|  + \frac{2\pi}{\Omega_{0y}}\left|s_y\right|\right)} . \nonumber
    % =& \, e^{2\pi \left(\left|\frac{s_x}{\Omega_{0x}}\right| + \left|\frac{s_y}{\Omega_{0y}}\right|\right)} , \nonumber \\    
    % \leq& \, e^{\frac{2\pi}{\alpha} \left(\left|s_x\right| + \left|s_y\right|\right)} , \nonumber
\end{align}

Since $Z(\mathrm{j}\Omega_x, \mathrm{j}\Omega_y)$  is a sum of entire functions, it is an EFET, and consequently $G_{NS}(\mathrm{j}\Omega_x, \mathrm{j}\Omega_y)$ is an EFET as well. The support of $g_{NS}(x,y)$ is given by $\left[-\frac{2\pi}{\Omega_{0x}}, \frac{2\pi}{\Omega_{0x}} \right] \times \left[-\frac{2\pi}{\Omega_{0y}}, \frac{2\pi}{\Omega_{0y}} \right] $ i.e., $\left[-T_{0x}, T_{0x} \right] \times \left[-T_{0y}, T_{0y} \right] $.

\end{proof}

% you can choose not to have a title for an appendix
% if you want by leaving the argument blank

\subsection{Proof that $G_{NS}(\mathrm{j}\Omega_x, \mathrm{j}\Omega_y) \in L^2(\mathbb{R})$}
\label{appendix:G_energy_nonseparable}
\begin{proof}[Proof]
% For the nonseparable case, we have
% \begin{equation}
%     G_{NS}(\mathrm{j}\Omega_x, \mathrm{j}\Omega_y) =\,\sum_{k_1\in\mathcal{K}_1}\sum_{k_2\in\mathcal{K}_2}\,q_{k_1,k_2}\,\frac{\sin\left(\frac{\pi\,\Omega_x}{\Omega_{0x}}\,+\, \frac{\pi\,\Omega_y}{\Omega_{0y}}\right)\sin\left(\frac{\pi\,\Omega_y}{\Omega_{0y}}\,-\, \frac{\pi\,\Omega_x}{\Omega_{0x}}\right)}{\left(\frac{\Omega_x - k_1\,\Omega_{0x}}{\Omega_{0x}} + \frac{\Omega_y - k_2\,\Omega_{0y}}{\Omega_{0y}}\right)\left(\frac{\Omega_y - k_2\,\Omega_{0y}}{\Omega_{0y}} - \frac{\Omega_x - k_1\,\Omega_{0x}}{\Omega_{0x}}\right)}. \nonumber
% \end{equation}

Using the hyperbolic sine identity: $\sinh(\mathrm{j}\theta)=\mathrm{j}\sin(\theta)$, $\forall \,\theta\in \mathbb{R}$, $G_{NS}(\mathrm{j}\Omega_{0x}, \mathrm{j}\Omega_{0y})$ is expressed as
\begin{align}
G_{NS}(\mathrm{j}\Omega_x, \mathrm{j}\Omega_y)
= \frac{\mathrm{j}^{(2 - |\mathcal{K}_1| - |\mathcal{K}_2|)}\,V(\mathrm{j}\Omega_x, \mathrm{j}\Omega_y)\,\sin\left(\frac{\pi\,\Omega_x}{\Omega_{0x}}\,+\, \frac{\pi\,\Omega_y}{\Omega_{0y}}\right)\sin\left(\frac{\pi\,\Omega_y}{\Omega_{0y}}\,-\, \frac{\pi\,\Omega_x}{\Omega_{0x}}\right)}{\prod \limits_{\substack{k_1\in\mathcal{K}_1 \\ k_2\in\mathcal{K}_2}}\left(\frac{\Omega_x}{\Omega_{0x}} + \frac{\Omega_y}{\Omega_{0y}} - k_1 - k_2 \right)\left(\frac{\Omega_y}{\Omega_{0y}} - \frac{\Omega_x}{\Omega_{0x}} - k_2 + k_1 \right)}, \nonumber
\end{align}

which when expanded using partial fraction decomposition takes the form
\begin{equation}
\centering
G_{NS}(\mathrm{j}\Omega_x, \mathrm{j}\Omega_y)=\sum_{k_1\in\mathcal{K}_1}\sum_{k_2\in\mathcal{K}_2}\frac{q_{k_1,k_2}}{\left(\frac{\Omega_x}{\Omega_{0x}} + \frac{\Omega_y}{\Omega_{0y}} - k_1 - k_2 \right)\left(\frac{\Omega_y}{\Omega_{0y}} - \frac{\Omega_x}{\Omega_{0x}} - k_2 + k_1 \right)}\sin\left(\frac{\pi\,\Omega_x}{\Omega_{0x}}\,+\, \frac{\pi\,\Omega_y}{\Omega_{0y}}\right)\sin\left(\frac{\pi\,\Omega_y}{\Omega_{0y}}\,-\, \frac{\pi\,\Omega_x}{\Omega_{0x}}\right).
\label{eq:non_sepa_partial}
\end{equation}
$q_{k_1,k_2}$ are the coefficients of the partial fraction expansion, and are given by
\begin{equation}
\centering 
q_{k_1,k_2}= \left.\left( \frac{\mathrm{j}^{(2 - |\mathcal{K}_1| - |\mathcal{K}_2|)}\,V(\mathrm{j}\Omega_x, \mathrm{j}\Omega_y)}{\prod \limits_{\substack{p_1\in\mathcal{K}_1\backslash\{k_1\} \\ p_2\in\mathcal{K}_2\backslash\{k_2\}}}\left(\frac{\Omega_x}{\Omega_{0x}} + \frac{\Omega_y}{\Omega_{0y}} - p_1 - p_2 \right)\left(\frac{\Omega_y}{\Omega_{0y}} - \frac{\Omega_x}{\Omega_{0x}} - p_2 + p_1 \right)} \right ) \right|_{\substack{\Omega_x=k_1\Omega_{0x} \\ \Omega_y=k_2\Omega_{0y}}}.\nonumber
\end{equation}

Using the Cauchy-Schwarts inequality, we have
\begin{align}
    \int\limits_{-\infty}^{\infty}\int\limits_{-\infty}^{\infty}|G_{NS}(\mathrm{j}\Omega_x, \mathrm{j}\Omega_y)|^2 \text{d}\Omega_x\text{d}\Omega_y = & \int\limits_{-\infty}^{\infty}\int\limits_{-\infty}^{\infty} \left| \sum_{k_1\in\mathcal{K}_1}\sum_{k_2\in\mathcal{K}_2}\,q_{k_1,k_2}\,\frac{\sin\left(\frac{\pi\,\Omega_x}{\Omega_{0x}}\,+\, \frac{\pi\,\Omega_y}{\Omega_{0y}}\right)\sin\left(\frac{\pi\,\Omega_y}{\Omega_{0y}}\,-\, \frac{\pi\,\Omega_x}{\Omega_{0x}}\right)}{\left(\frac{\Omega_x - k_1\,\Omega_{0x}}{\Omega_{0x}} + \frac{\Omega_y - k_2\,\Omega_{0y}}{\Omega_{0y}}\right)\left(\frac{\Omega_y - k_2\,\Omega_{0y}}{\Omega_{0y}} - \frac{\Omega_x - k_1\,\Omega_{0x}}{\Omega_{0x}}\right)}\right|^2\text{d}\Omega_x\text{d}\Omega_y, \nonumber \\
    \leq& |\mathcal{K}_1||\mathcal{K}_2|\sum_{k_1\in\mathcal{K}_1}\sum_{k_2\in\mathcal{K}_2}\,|q_{k_1,k_2}|^2 \nonumber \\ & \int\limits_{-\infty}^{\infty}\int\limits_{-\infty}^{\infty}\left|\frac{\sin\left(\frac{\pi\,\Omega_x}{\Omega_{0x}}\,+\, \frac{\pi\,\Omega_y}{\Omega_{0y}}\right)\sin\left(\frac{\pi\,\Omega_y}{\Omega_{0y}}\,-\, \frac{\pi\,\Omega_x}{\Omega_{0x}}\right)}{\left(\frac{\Omega_x - k_1\,\Omega_{0x}}{\Omega_{0x}} + \frac{\Omega_y - k_2\,\Omega_{0y}}{\Omega_{0y}}\right)\left(\frac{\Omega_y - k_2\,\Omega_{0y}}{\Omega_{0y}} - \frac{\Omega_x - k_1\,\Omega_{0x}}{\Omega_{0x}}\right)}\right|^2 \text{d}\Omega_x\text{d}\Omega_y . \nonumber
\end{align}

Let us consider
\begin{align}
    \zeta = \int\limits_{-\infty}^{\infty}\int\limits_{-\infty}^{\infty}\left|\frac{\sin\left(\frac{\pi\,\Omega_x}{\Omega_{0x}}\,+\, \frac{\pi\,\Omega_y}{\Omega_{0y}}\right)\sin\left(\frac{\pi\,\Omega_y}{\Omega_{0y}}\,-\, \frac{\pi\,\Omega_x}{\Omega_{0x}}\right)}{\left(\frac{\Omega_x - k_1\,\Omega_{0x}}{\Omega_{0x}} + \frac{\Omega_y - k_2\,\Omega_{0y}}{\Omega_{0y}}\right)\left(\frac{\Omega_y - k_2\,\Omega_{0y}}{\Omega_{0y}} - \frac{\Omega_x - k_1\,\Omega_{0x}}{\Omega_{0x}}\right)}\right|^2 \text{d}\Omega_x\text{d}\Omega_y .
    \label{eq:double_int} 
\end{align}

Substituting $\widetilde{\Omega}_x = \frac{\Omega_x - k_1\,\Omega_{0x}}{\Omega_{0x}}$ and $ \widetilde{\Omega}_y = \frac{\Omega_y - k_2\,\Omega_{0y}}{\Omega_{0y}}  $ in (\ref{eq:double_int}) and simplifying it further, we get 

\begin{align}
    \zeta =& \,\,\Omega_{0x}\Omega_{0y} \int\limits_{-\infty}^{\infty}\int\limits_{-\infty}^{\infty}\left|\frac{\sin\left(\pi\,\widetilde{\Omega}_x + k_1\pi\,+\, \pi\,\widetilde{\Omega}_y + k_2\pi \right)\sin\left(\pi\,\widetilde{\Omega}_y + k_2\pi\,-\, \pi\,\widetilde{\Omega}_x + k_1\pi\right)}{\left(\widetilde{\Omega}_x + \widetilde{\Omega}_y\right)\left(\widetilde{\Omega}_x + \widetilde{\Omega}_y\right)}\right|^2 \text{d}\widetilde{\Omega}_x\text{d}\widetilde{\Omega}_y, \nonumber \\
    =& \,\,\Omega_{0x}\Omega_{0y} \int\limits_{-\infty}^{\infty}\int\limits_{-\infty}^{\infty}\left|\frac{\sin\left(\pi\,(\widetilde{\Omega}_x + \widetilde{\Omega}_y) \right)\sin\left(\pi\,(\widetilde{\Omega}_y - \widetilde{\Omega}_x)\right)}{\left(\widetilde{\Omega}_x + \widetilde{\Omega}_y\right)\left(\widetilde{\Omega}_x + \widetilde{\Omega}_y\right)}\right|^2 \text{d}\widetilde{\Omega}_x\text{d}\widetilde{\Omega}_y, \nonumber \\
    =& \,\,\pi^4\Omega_{0x}\Omega_{0y} \int\limits_{-\infty}^{\infty}\int\limits_{-\infty}^{\infty}\left|\text{sinc}\left(\widetilde{\Omega}_x + \widetilde{\Omega}_y \right)\text{sinc}\left(\widetilde{\Omega}_y - \widetilde{\Omega}_x\right)\right|^2 \text{d}\widetilde{\Omega}_x\text{d}\widetilde{\Omega}_y. \label{eq:before_rot}
\end{align}

Consider rotation of a function $\vartheta(\widetilde{\Omega}_x,\widetilde{\Omega}_y) = \text{sinc}\left(\widetilde{\Omega}_x + \widetilde{\Omega}_y \right)\text{sinc}\left(\widetilde{\Omega}_y - \widetilde{\Omega}_x\right)$  by $\theta = 45^{\circ}$. Since energy of $\vartheta(\widetilde{\Omega}_x,\widetilde{\Omega}_y)$ is preserved by the rotation operator $R = \left[\begin{smallmatrix}\cos\theta&-\sin\theta\\\sin\theta&\cos\theta\end{smallmatrix}\right]$, we substitute the rotated $\vartheta(\widetilde{\Omega}_x,\widetilde{\Omega}_y)$ in \eqref{eq:before_rot}. Thus the right-hand side of (\ref{eq:before_rot}) is given by 
\begin{align}
    \zeta =&\,\,\pi^4\Omega_{0x}\Omega_{0y} \int\limits_{-\infty}^{\infty}\int\limits_{-\infty}^{\infty}\left|\text{sinc}\left(\sqrt{2}\widetilde{\Omega}_x \right)\text{sinc}\left(\sqrt{2}\widetilde{\Omega}_y\right)\right|^2 \text{d}\widetilde{\Omega}_x\text{d}\widetilde{\Omega}_y, \nonumber \\
    =& \,\,\pi^4\Omega_{0x}\Omega_{0y} \int\limits_{-\infty}^{\infty}\left|\text{sinc}\left(\sqrt{2}\widetilde{\Omega}_x \right)\right|^2\text{d}\widetilde{\Omega}_x\int\limits_{-\infty}^{\infty}\left| \text{sinc}\left(\sqrt{2}\widetilde{\Omega}_y\right)\right|^2 \text{d}\widetilde{\Omega}_y,\nonumber\\
    =& \,\,\pi^4\Omega_{0x}\Omega_{0y}\left(\frac{1}{\sqrt{2}}\right)\left( \frac{1}{\sqrt{2}}\right),\nonumber\\
    =& \,\,\frac{\pi^4\Omega_{0x}\Omega_{0y}}{2}.\nonumber
\end{align}
% where we apply rotation of $ 45^{\circ}$ in (\ref{eq:before_rot}) to get (\ref{eq:after_rot}). As we know that rotation of a function does not change its energy, the equality between (\ref{eq:before_rot}) and (\ref{eq:after_rot}) holds good. 
Hence, we have $\zeta < \infty $, and consequently, $\int\limits_{-\infty}^{\infty}\int\limits_{-\infty}^{\infty}|G_{NS}(\mathrm{j}\Omega_x, \mathrm{j}\Omega_y)|^2 \text{d}\Omega_x\text{d}\Omega_y < \infty $.

\end{proof}

\subsection{Expression for $g_{NS}(x,y)$}
\label{appendix:kernel_nonseparable_expression}
Consider
\begin{equation}
\centering
G_{NS}(\mathrm{j}\Omega_x, \mathrm{j}\Omega_y)=\sum_{k_1\in\mathcal{K}_1}\sum_{k_2\in\mathcal{K}_2}\frac{q_{k_1,k_2}}{\left(\frac{\Omega_x}{\Omega_{0x}} + \frac{\Omega_y}{\Omega_{0y}} - k_1 - k_2 \right)\left(\frac{\Omega_y}{\Omega_{0y}} - \frac{\Omega_x}{\Omega_{0x}} - k_2 + k_1 \right)}\sin\left(\frac{\pi\,\Omega_x}{\Omega_{0x}}\,+\, \frac{\pi\,\Omega_y}{\Omega_{0y}}\right)\sin\left(\frac{\pi\,\Omega_y}{\Omega_{0y}}\,-\, \frac{\pi\,\Omega_x}{\Omega_{0x}}\right).
\end{equation}
Upon simplifying further, we get
\begin{align}
G_{NS}(\mathrm{j}\Omega_{0x},\mathrm{j}\Omega_{0y}) =& \sum_{k_1\in\mathcal{K}_1}\sum_{k_2\in\mathcal{K}_2}\,q_{k_1,k_2}\,\frac{\sin\left(\frac{\pi\,\Omega_x}{\Omega_{0x}}\,+\, \frac{\pi\,\Omega_y}{\Omega_{0y}}\right)\sin\left(\frac{\pi\,\Omega_y}{\Omega_{0y}}\,-\, \frac{\pi\,\Omega_x}{\Omega_{0x}}\right)}{\left(\frac{\Omega_x - k_1\,\Omega_{0x}}{\Omega_{0x}} + \frac{\Omega_y - k_2\,\Omega_{0y}}{\Omega_{0y}}\right)\left(\frac{\Omega_y - k_2\,\Omega_{0y}}{\Omega_{0y}} - \frac{\Omega_x - k_1\,\Omega_{0x}}{\Omega_{0x}}\right)}, \nonumber\\
=&  \sum_{k_1\in\mathcal{K}_1}\sum_{k_2\in\mathcal{K}_2}\,q_{k_1,k_2}\,(-1)^{(k_1 + k_2 + k_2 - k_1)}\frac{\sin\left(\frac{\pi\,\Omega_x}{\Omega_{0x}}\,+\, \frac{\pi\,\Omega_y}{\Omega_{0y}}  - \pi(k_1 + k_2)\right)\sin\left(\frac{\pi\,\Omega_y}{\Omega_{0y}}\,-\, \frac{\pi\,\Omega_x}{\Omega_{0x}}  - \pi(k_2 - k_1)\right)}{\left(\frac{\Omega_x - k_1\,\Omega_{0x}}{\Omega_{0x}} + \frac{\Omega_y - k_2\,\Omega_{0y}}{\Omega_{0y}}\right)\left(\frac{\Omega_y - k_2\,\Omega_{0y}}{\Omega_{0y}} - \frac{\Omega_x - k_1\,\Omega_{0x}}{\Omega_{0x}}\right)}, \nonumber\\
=&  \,\, \pi^2 \sum_{k_1\in\mathcal{K}_1}\sum_{k_2\in\mathcal{K}_2}\,q_{k_1,k_2}\,{\text{sinc}\left(\frac{\Omega_x - k_1\,\Omega_{0x}}{\Omega_{0x}}\,+\, \frac{\Omega_y - k_2\,\Omega_{0y}}{\Omega_{0y}}\right)\text{sinc}\left(\frac{\Omega_y - k_2\,\Omega_{0y}}{\Omega_{0y}}\,-\, \frac{\Omega_x - k_1\,\Omega_{0x}}{\Omega_{0x}}\right)}.
\label{eq:G_non_sepa}
\end{align}
Using the Lemma \ref{lemma:inverse_fourier_nonseparable_sinc} in Appendix \ref{appendix:add_proposition} and the frequency shift property, the inverse Fourier transform of (\ref{eq:G_non_sepa}) is given by
\begin{align}
g_{NS}(x,y)=&\,\,\,\,\frac{\Omega_{0x}\Omega_{0y}}{8}\,\,\text{rect}\left( \frac{\Omega_{0x}x + \Omega_{0y}y}{4\pi}\right)\text{rect}\left( \frac{\Omega_{0y}y - \Omega_{0x}x}{4\pi}\right)\, \sum_{k_1\in\mathcal{K}_1}\sum_{k_2\in\mathcal{K}_2}q_{k_1,k_2}\, e^{\mathrm{j}(k_1\Omega_{0x}x + k_2\Omega_{0y}y)}.
\label{eq:g_x_y_non_sepa}
\end{align}

% \end{proof}

\section{ }
\label{appendix:add_proposition}

\begin{lemma}
\label{lemma:inverse_fourier_nonseparable_sinc}
If $F(\Omega_x, \Omega_y) = \text{sinc}\left(\frac{\Omega_x}{\Omega_{0x}} + \frac{\Omega_y}{\Omega_{0y}}\right)\text{sinc}\left(\frac{\Omega_y}{\Omega_{0y}} - \frac{\Omega_x}{\Omega_{0x}}\right)$, then its inverse Fourier transform is given by $f(x,y) = \,\,\frac{\Omega_{0x}\Omega_{0y}}{8\pi^2}\,\,\text{rect}\left( \frac{\Omega_{0x}x + \Omega_{0y}y}{4\pi}\right)\text{rect}\left( \frac{\Omega_{0y}y - \Omega_{0x}x}{4\pi}\right)$.
\end{lemma}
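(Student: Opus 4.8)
The plan is to evaluate the two-dimensional inverse Fourier transform directly, using a two-stage change of variables that renders the integrand separable. Working with the paper's convention $f(x,y)=\frac{1}{(2\pi)^2}\int_{-\infty}^{\infty}\int_{-\infty}^{\infty} F(\Omega_x,\Omega_y)\,e^{\mathrm{j}(\Omega_x x+\Omega_y y)}\,\mathrm{d}\Omega_x\,\mathrm{d}\Omega_y$, I would first normalize the frequency axes by setting $\alpha=\Omega_x/\Omega_{0x}$ and $\beta=\Omega_y/\Omega_{0y}$. This absorbs the scale factors inside the sinc arguments and contributes a Jacobian $\Omega_{0x}\Omega_{0y}$, leaving the integrand $\text{sinc}(\alpha+\beta)\,\text{sinc}(\beta-\alpha)$ together with the exponential $e^{\mathrm{j}(\Omega_{0x}\alpha x+\Omega_{0y}\beta y)}$.

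Next I would introduce the rotated-and-scaled coordinates $u=\alpha+\beta$ and $v=\beta-\alpha$, under which the two sinc factors decouple completely into $\text{sinc}(u)\,\text{sinc}(v)$. This is exactly the $45^\circ$ rotation used to separate the same product in the $L^2$ argument of Appendix \ref{appendix:G_energy_nonseparable}, now deployed to diagonalize the kernel; its Jacobian is $\frac{1}{2}$, and the exponent rearranges into the separable form $u\,p+v\,q$ with $p=\frac{\Omega_{0x}x+\Omega_{0y}y}{2}$ and $q=\frac{\Omega_{0y}y-\Omega_{0x}x}{2}$. The double integral then factors into a product of two identical one-dimensional integrals.

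The core one-dimensional fact I would invoke is the standard sinc--rect duality $\int_{-\infty}^{\infty}\text{sinc}(u)\,e^{\mathrm{j}up}\,\mathrm{d}u=\text{rect}\!\left(\frac{p}{2\pi}\right)$, which follows from $\text{sinc}(u)$ being the inverse transform of the indicator of $(-\pi,\pi)$. Applying it to both factors produces $\text{rect}(p/2\pi)\,\text{rect}(q/2\pi)$, and substituting the expressions for $p$ and $q$ yields precisely the rect arguments $(\Omega_{0x}x+\Omega_{0y}y)/(4\pi)$ and $(\Omega_{0y}y-\Omega_{0x}x)/(4\pi)$ claimed in the lemma.

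Finally, I would collect the multiplicative constants: the prefactor $\frac{1}{(2\pi)^2}$ from the inversion, times $\Omega_{0x}\Omega_{0y}$ from the normalization, times $\frac{1}{2}$ from the rotation, giving exactly $\frac{\Omega_{0x}\Omega_{0y}}{8\pi^2}$. I expect the only genuine obstacle to be the bookkeeping, keeping the orientation of the spatial linear forms $p,q$ consistent with their frequency-domain counterparts and tracking both Jacobian factors so that the constant lands at $\frac{\Omega_{0x}\Omega_{0y}}{8\pi^2}$ rather than off by a power of two. The sinc--rect integral itself is routine once the transform convention is fixed.
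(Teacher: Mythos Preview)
Your proposal is correct and is essentially the same argument as the paper's: both separate the product $\text{sinc}(\alpha+\beta)\,\text{sinc}(\beta-\alpha)$ via the $45^\circ$ rotation, reduce to the one-dimensional sinc--rect pair, and then handle the $\Omega_{0x},\Omega_{0y}$ scaling. The only cosmetic difference is that the paper invokes the rotation and scaling properties of the 2-D Fourier transform as named identities (starting from the separable pair $\text{sinc}(\sqrt{2}\Omega_x)\text{sinc}(\sqrt{2}\Omega_y)\leftrightarrow\frac{1}{8\pi^2}\text{rect}(x/2\sqrt{2}\pi)\text{rect}(y/2\sqrt{2}\pi)$ and then rotating and scaling), whereas you perform the equivalent changes of variables directly inside the inversion integral; the underlying computation and the bookkeeping of constants are identical.
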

\begin{proof}[Proof]

We know that the 1-D inverse Fourier transform of $F(\Omega) = \text{sinc}(\sqrt{2}\Omega)$  is given by

\begin{equation}
    f(t) = \frac{1}{2\sqrt{2}\pi}\text{rect}\left(\frac{t}{2\sqrt{2}\pi}\right).\nonumber
\end{equation}

For the separable two-dimensional counterpart, the inverse Fourier transform of the function
\begin{equation}
    \widetilde{F}(\Omega_x, \Omega_y) = \text{sinc}(\sqrt{2}\Omega_x)\text{sinc}(\sqrt{2}\Omega_y), \label{eq:foo}
\end{equation}
 is given by,
\begin{equation}
    \widetilde{f}(x,y) = \frac{1}{8\pi^2}\text{rect}\left(\frac{x}{2\sqrt{2}\pi}\right)\text{rect}\left(\frac{y}{2\sqrt{2}\pi}\right). \nonumber
\end{equation}

Applying rotation of $\theta = - 45^{\circ}$ on the function in (\ref{eq:foo}) and using the 2-D rotation property of the Fourier transform we have,
\begin{align}
F\left(\frac{\Omega_x}{\sqrt{2}} + \frac{\Omega_y}{\sqrt{2}}, - \frac{\Omega_x}{\sqrt{2}} + \frac{\Omega_y}{\sqrt{2}}\right) 
=&\,\,\text{sinc}\left({\Omega_x} + {\Omega_y}\right)\text{sinc}\left({\Omega_y} - {\Omega_x}\right),\nonumber
\end{align}
and its corresponding inverse Fourier transform is given by
\begin{align}
\hspace{-.6in} f(x,y) =& \,\,\frac{1}{8\pi^2}\text{rect}\left(\frac{\frac{x}{\sqrt{2}} + \frac{y}{\sqrt{2}}}{2\sqrt{2}\pi}\right)\text{rect}\left(\frac{\frac{y}{\sqrt{2}} - \frac{x}{\sqrt{2}}}{2\sqrt{2}\pi}\right),\nonumber\\
\hspace{0.1in}=& \,\,\frac{1}{8\pi^2}\text{rect}\left(\frac{x + y}{4\pi}\right)\text{rect}\left(\frac{y - x}{4\pi}\right).
\label{eq:goo}
\end{align}

Using the two dimensional scaling property on (\ref{eq:goo}), the inverse Fourier transform of 

\begin{equation}
    F(\Omega_x, \Omega_y) = \text{sinc}\left(\frac{\Omega_x}{\Omega_{0x}} + \frac{\Omega_y}{\Omega_{0y}}\right)\text{sinc}\left(\frac{\Omega_y}{\Omega_{0y}} - \frac{\Omega_x}{\Omega_{0x}}\right),\nonumber
\end{equation}
is given by,
\begin{equation}
    f(x,y) = \,\,\frac{\Omega_{0x}\Omega_{0y}}{8\pi^2}\,\,\text{rect}\left( \frac{\Omega_{0x}x + \Omega_{0y}y}{4\pi}\right)\text{rect}\left( \frac{\Omega_{0y}y - \Omega_{0x}x}{4\pi}\right).\nonumber
\end{equation}

\end{proof}

% % use section* for acknowledgment
% \section*{Acknowledgment}

% The authors would like to thank... \cite{satish_paley_wiener}

% \ifCLASSOPTIONcaptionsoff
%   \newpage
% \fi

% \pagebreak

\bibliographystyle{IEEEbib}
\bibliography{references}

\begin{thebibliography}{10}

\bibitem{vetterli}
M.~Vetterli, P.~Marziliano, and T.~Blu,
\newblock ``Sampling signals with finite rate of innovation,''
\newblock {\em IEEE Trans. Signal Process.}, vol. 50, no. 6, pp. 1417--1428,
  Jun. 2002.

\bibitem{betzig2006imaging}
E.~Betzig, G.~H. Patterson, R.~Sougrat, O.~W. Lindwasser, S.~Olenych, J.~S.
  Bonifacino, M.~W. Davidson, J.~Lippincott-Schwartz, and H.~F. Hess,
\newblock ``Imaging intracellular fluorescent proteins at nanometer
  resolution,''
\newblock {\em Science}, vol. 313, no. 5793, pp. 1642--1645, 2006.

\bibitem{folling2008fluorescence}
J.~F{\"o}lling, M.~Bossi, H.~Bock, R.~Medda, C.~A. Wurm, B.~Hein, S.~Jakobs,
  C.~Eggeling, and S.~W. Hell,
\newblock ``Fluorescence nanoscopy by ground-state depletion and
  single-molecule return,''
\newblock {\em Nature Methods}, vol. 5, no. 11, pp. 943--945, 2008.

\bibitem{ImageAstronomy_Molina}
R.~Molina, J.~N. de~Murga, F.~J. Cortijo, and J.~Mateos,
\newblock ``Image restoration in astronomy: {a} {B}ayesian perspective,''
\newblock {\em IEEE Signal Process. Mag.}, vol. 18, no. 2, pp. 11--29, 2001.

\bibitem{PantinEtAl07}
E.~Pantin, J.~L. Starck, and F.~Murtagh,
\newblock ``Deconvolution and blind deconvolution in astronomy,''
\newblock in {\em Blind Image Deconvolution: {T}heory and Applications}, pp.
  100--138. CRC press, 2007.

\bibitem{sudhakar2013compressive}
P.~Sudhakar, L.~Jacques, X.~Dubois, P.~Antoine, and L.~Joannes,
\newblock ``Compressive schlieren deflectometry,''
\newblock in {\em Proc. IEEE Int. Conf. Acoust., Speech, Signal Process.
  (ICASSP)}, 2013, pp. 5999--6003.

\bibitem{fri_strang}
P.~L. Dragotti, M.~Vetterli, and T.~Blu,
\newblock ``Sampling moments and reconstructing signals of finite rate of
  innovation: Shannon meets {S}trang-{F}ix,''
\newblock {\em IEEE Trans. Signal Process.}, vol. 55, no. 5, pp. 1741--1757,
  May 2007.

\bibitem{eldar_sos}
R.~Tur, Y.~C. Eldar, and Z.~Friedman,
\newblock ``Innovation rate sampling of pulse streams with application to
  ultrasound imaging,''
\newblock {\em IEEE Trans. Signal Process.}, vol. 59, no. 4, pp. 1827--1842,
  Apr. 2011.

\bibitem{css}
C.~S. Seelamantula and M.~Unser,
\newblock ``A generalized sampling method for finite-rate-of-innovation-signal
  reconstruction,''
\newblock {\em IEEE Signal Process. Lett.}, pp. 813--816, 2008.

\bibitem{olkkonen}
H.~Olkkonen and J.~T. Olkkonen,
\newblock ``Measurement and reconstruction of impulse train by parallel
  exponential filters,''
\newblock {\em IEEE Signal Process. Lett.}, vol. 15, pp. 241--244, 2008.

\bibitem{satish_paley_wiener}
S.~Mulleti and C.~S. Seelamantula,
\newblock ``Paley$-${W}iener characterization of kernels for
  finite-rate-of-innovation sampling,''
\newblock {\em IEEE Trans. Signal Process.}, vol. 65, no. 22, pp. 5860--5872,
  Nov 2017.

\bibitem{maravic_2D}
I.~Maravi\'{c} and M.~Vetterli,
\newblock ``Exact sampling results for some classes of parametric
  nonbandlimited 2-{D} signals,''
\newblock {\em IEEE Trans. Signal Process.}, vol. 52, no. 1, pp. 175--189, Jan.
  2004.

\bibitem{shukla2007sampling}
P.~Shukla and P.~L. Dragotti,
\newblock ``Sampling schemes for multidimensional signals with finite rate of
  innovation,''
\newblock {\em IEEE Trans. Signal Process.}, vol. 55, no. 7, pp. 3670--3686,
  2007.

\bibitem{baboulaz}
L.~Baboulaz and P.~L. Dragotti,
\newblock ``Exact feature extraction using finite rate of innovation principles
  with an application to image super-resolution,''
\newblock {\em IEEE Trans. on Image Process.}, vol. 18, no. 2, pp. 281--298,
  Feb. 2009.

\bibitem{hirabayashi2010spline}
A.~Hirabayashi and P.~L. Dragotti,
\newblock ``E-spline sampling for precise and robust line-edge extraction,''
\newblock in {\em Proc. IEEE Int. Conf. Image Process. (ICIP)}. IEEE, 2010, pp.
  909--912.

\bibitem{chen20122D}
C.~Chen, P.~Marziliano, and A.~C. Kot,
\newblock ``2{D} finite rate of innovation reconstruction method for step edge
  and polygon signals in the presence of noise,''
\newblock {\em IEEE Trans. Signal Process.}, vol. 60, no. 6, pp. 2851--2859,
  2012.

\bibitem{FRI_Curves_Pan}
H.~Pan, T.~Blu, and P.~L. Dragotti,
\newblock ``Sampling curves with finite rate of innovation,''
\newblock {\em IEEE Trans. Signal Process.}, vol. 62, no. 2, pp. 458--471,
  2014.

\bibitem{anindita_2D_fri}
A.~De and C.~S. Seelamantula,
\newblock ``Design of sampling kernels and sampling rates for two-dimensional
  finite rate of innovation signals,''
\newblock in {\em Proc. IEEE Int. Conf. Image Process. (ICIP)}, Oct 2018, pp.
  1443--1447.

\bibitem{mulleti2014ultrasound}
S.~Mulleti, S.~Nagesh, R.~Langoju, A.~Patil, and C.~S. Seelamantula,
\newblock ``Ultrasound image reconstruction using the finite-rate-of-innovation
  principle,''
\newblock in {\em Proc. IEEE Int. Conf. Image Process. (ICIP)}, 2014, pp.
  1728--1732.

\bibitem{moses}
P.~Stoica and R.~L. Moses,
\newblock {\em Introduction to Spectral Analysis},
\newblock Upper Saddle River, NJ: Prentice Hall, 1997.

\bibitem{prony}
G.~R. deProny,
\newblock ``Essai experimental et analytique: {S}ur les lois de la
  dilatabilit{\'e} de fluides {\'e}lastiques et sur celles de la force
  expansive de la vapeur de l'eau et de la vapeur de l'alcool, {\`a}
  diff{\'e}rentes temp{\'e}ratures,''
\newblock {\em J. de l'Ecole Polytechnique}, vol. 1, no. 2, pp. 24--76, 1795.

\bibitem{2DESPRIT_rouquette2001}
S.~Rouquette and M.~Najim,
\newblock ``Estimation of frequencies and damping factors by two-dimensional
  {ESPRIT} type methods,''
\newblock {\em IEEE Trans. Signal Process.}, vol. 49, no. 1, pp. 237--245,
  2001.

\bibitem{2DHarmonicRetrieval_vanpoucke1994}
F.~Vanpoucke, M.~Moonen, and Y.~Berthoumieu,
\newblock ``An efficient subspace algorithm for 2-{D} harmonic retrieval,''
\newblock in {\em Proc. IEEE Int. Conf. Acoust., Speech, Signal Process.
  (ICASSP)}. IEEE, 1994, pp. 461--464.

\bibitem{2DparameterPaulraj1998}
A.~J. Van~der Veen, M.~C. Vanderveen, and A.~Paulraj,
\newblock ``Joint angle and delay estimation using shift-invariance
  techniques,''
\newblock {\em IEEE Trans. Signal Process.}, vol. 46, no. 2, pp. 405--418,
  1998.

\bibitem{2DESPRIT_haardt1995}
M.~Haardt, M.~D. Zoltowski, C.~P. Mathews, and J.~Nossek,
\newblock ``2-{D} unitary {ESPRIT} for efficient 2-{D} parameter estimation,''
\newblock in {\em Proc. IEEE Int. Conf. Acoust., Speech, Signal Process.
  (ICASSP)}. IEEE, 1995, pp. 2096--2099.

\bibitem{maravic2005sampling}
I.~Maravi\'{c} and M.~Vetterli,
\newblock ``Sampling and reconstruction of signals with finite rate of
  innovation in the presence of noise,''
\newblock {\em IEEE Trans. Signal Process.}, vol. 53, no. 8, pp. 2788--2805,
  2005.

\bibitem{paley_wiener}
Raymond E. A.~C. Paley and Norbert Wiener,
\newblock {\em Fourier {T}ransforms in the {C}omplex {D}omain}, vol.~19 of {\em
  American Mathematical Society Colloquium Publications},
\newblock American Mathematical Society, Providence, RI, 1987,
\newblock Reprint of the 1934 original.

\bibitem{gelfand_generalized_functions}
I.~M. Gel'fand and G.~E. Shilov,
\newblock {\em Generalized Functions},
\newblock Academic Press, New York and London, 1968.

\bibitem{uriguen2013fri}
J.~A. Uriguen, T.~Blu, and P.~L. Dragotti,
\newblock ``{FRI} sampling with arbitrary kernels,''
\newblock {\em IEEE Trans. Signal Process.}, vol. 61, no. 21, pp. 5310--5323,
  Nov. 2013.

\end{thebibliography}

\end{document}